\definecolor{dblue}{rgb}{0,0.1,.6}
\newcommand{\id}{\mathbbm{1}}
\newcommand{\bra}{\langle}
\newcommand{\ket}{\rangle}
\newcommand{\Tr}{\operatorname{Tr}}
\newcommand{\Span}{\operatorname{span}}
\newcommand{\supp}{\operatorname{supp}}
\renewcommand{\vec}[1]{{\boldsymbol{#1}}}
\newcommand{\ud}{\mathrm{d}}
\newcommand{\CC}{\mathbb{C}}
\newcommand{\mc}[1]{\mathcal{#1}}
\newcommand{\pdag}{{\phantom{\dag}}}
\newcommand{\A}{\mathscr{L}}
\renewcommand{\H}{\mc{H}}
\renewcommand{\L}{\mc{L}}
\renewcommand{\P}{\mc{P}}
\newcommand{\E}{\mc{E}}
\newcommand{\V}{\mc{V}}
\newcommand{\W}{\mc{W}}
\newcommand{\veps}{\varepsilon}
\newcommand{\hA}{\hat{A}}
\newcommand{\hB}{\hat{B}}
\newcommand{\hC}{\hat{C}}
\newcommand{\hc}{\hat{c}}
\newcommand{\hH}{\hat{H}}
\newcommand{\hK}{\hat{K}}
\newcommand{\hL}{\hat{L}}
\newcommand{\hM}{\hat{M}}
\newcommand{\hO}{\hat{O}}
\newcommand{\hP}{\hat{P}}
\newcommand{\hU}{\hat{U}}
\newcommand{\hsigma}{\hat{\sigma}}
\newcommand{\dm}{{\hat{\rho}}}
\newcommand{\mri}{\mathrm{i}\mkern1mu}
\newcommand{\up}{{\uparrow}}
\newcommand{\down}{{\downarrow}}
\newcommand{\vk}{\vec{k}}
\renewcommand{\Bmatrix}[1]{\begin{bmatrix}#1\end{bmatrix}}
\newcommand{\Pmatrix}[1]{\begin{pmatrix}#1\end{pmatrix}}
\newtheorem{theorem}{Theorem}[section]
\newtheorem{proposition}[theorem]{Proposition}
\newtheorem{corollary}[theorem]{Corollary}
\newtheorem{example}{Example}[section]
\newtheorem{definition}{Definition}[section]
\newtheorem*{remark}{Remark}
\def\@hangfrom@section#1#2#3{\@hangfrom{#1#2}#3}
\def\@hangfroms@section#1#2{#1#2}
\newcommand{\duke}  {Department of Physics, Duke University, Durham, North Carolina 27708, USA}
\newcommand{\dqc}   {Duke Quantum Center, Duke University, Durham, North Carolina 27701, USA}
\begin{document}

\title{Criteria for Davies Irreducibility of Markovian Quantum Dynamics}
\author{Yikang Zhang}
\affiliation{\duke}
\author{Thomas Barthel}
\affiliation{\duke}
\affiliation{\dqc}

\begin{abstract}
The dynamics of Markovian open quantum systems are described by Lindblad master equations, generating a quantum dynamical semigroup. An important concept for such systems is (Davies) irreducibility, i.e., the question whether there exist non-trivial invariant subspaces. Steady states of irreducible systems are unique and faithful, i.e., they have full rank. In the 1970s, Frigerio showed that a system is irreducible if the Lindblad operators span a self-adjoint set with trivial commutant. We discuss a more general and powerful algebraic criterion, showing that a system is irreducible if and only if the multiplicative algebra generated by the Lindblad operators $L_a$ and the operator $K=iH+\sum_a L^\dagger_aL_a$, involving the Hamiltonian $H$, is the entire operator space. Examples for two-level systems, show that a change of Hamiltonian terms as well as the addition or removal of dissipators can render a reducible system irreducible and vice versa. Examples for many-body systems show that a large class of spin chains can be rendered irreducible by dissipators on just one or two sites.
Additionally, we discuss the decisive differences between (Davies) reducibility and Evans reducibility for quantum channels and dynamical semigroups which has lead to some confusion in the recent physics literature, especially, in the context of boundary-driven systems. We give a criterion for quantum reducibility in terms of associated classical Markov processes and, lastly, discuss the relation of the main result to the stabilization of pure states and argue that systems with local Lindblad operators cannot stabilize pure Fermi-sea states.
\end{abstract}

\date{September 26, 2023}

\maketitle

\vspace{1em}
\renewcommand{\baselinestretch}{0.7}\normalsize
\tableofcontents
\renewcommand{\baselinestretch}{1}\normalsize

\newpage
\section{Introduction}
Recent experimental progress on highly controllable quantum systems has motivated a wave of research on open quantum systems, where driving and dissipation can lead to undesirable decoherence but may also stabilize interesting states and lead to new physics. For many open systems there is a clear separation between the system and environment time scales such that the dynamics becomes \emph{Markovian}. A typical case are systems that are weakly coupled to the environment and described in the framework of the Born-Markov and secular approximations \cite{Alicki2007,Breuer2007}. The evolution of Markovian open systems is characterized by a \emph{Lindblad master equation} \cite{Lindblad1976-48, Gorini1976-17}
\begin{align}\label{eq:QME}
	\partial_t \dm=\L(\dm)=-\mri[\hH,\dm]+\sum_{\alpha}\left(\hL_\alpha\dm \hL_\alpha^\dag-\frac{1}{2}\{\hL_\alpha^\dag \hL_\alpha,\dm\}\right)
\end{align}
where $\dm$ is the system density operator, $\hH$ is the Hamiltonian, and the Lindblad operators $\hL_\alpha$ capture the environment coupling, which can lead to decoherence and dissipation. If the master equation is derived from a Hamiltonian description of a system weakly coupled to environmental degrees of freedom, the Hamiltonian $\hH$ in Eq.~\eqref{eq:QME} comprises both the system Hamiltonian and the Lamb-shift Hamiltonian \cite{Alicki2007,Breuer2007}.
We consider time-independent Liouvillians $\L$, where steady states $\dm_\text{ss}$ obey $\L(\dm_\text{ss})=0$.

Fundamental questions concern conservation laws and the existence of unique steady states, which have important consequences concerning the preparation of quantum states and driven-dissipative phase transitions. Criteria for the uniqueness of the steady state have been studied extensively in the past 50 years with seminal works like \cite{Davies1970-19,Evans1977-54,Evans1978-2-17,Frigerio1977-2,Frigerio1978-63,Spohn1980-52} and more recent contributions such as \cite{Fagnola1998-1,Schrader2001-30,Fagnola2002-43,Umanita2005-134,Schirmer2010-81,Carbone2016-77,Nigro2019-4}.

The main topic of this work is a necessary and sufficient algebraic criterion for Davies irreducibility based on the Hamiltonian and Lindblad operators, which makes it possible to efficiently assess irreducibility for large classes of many-body systems.
Sections~\ref{sec:motivation} and \ref{sec:distinction} explain and illustrate the decisive differences between the concepts of Evans irreducibility and (Davies) irreducibility, which caused some confusion in the physics literature. Only Davies irreducibility guarantees the uniqueness and faithfulness of steady states \footnote{According to Brouwer's fixed point theorem, every finite-dimensional Markovian system has at least one steady state. Infinite-dimensional systems need not have steady states and can be unstable. See, for example, Refs.~\cite{Davies1970-19,Barthel2021_12}.}.
In Section~\ref{sec:classical}, we relate quantum channels to classical Markov processes and point out a relation between Davies irreducibility for the quantum system and the classical irreducibility of the associated Markov processes.
In Section~\ref{sec:algebraic}, we leverage approaches to the characterization of invariant subspaces to formulate algebraic criteria for Davies irreducibility. According to the main result, a system is irreducible if and only if the Lindblad operators $\{\hL_\alpha\}$ and the operator $\hK=\mri\hH+\sum_\alpha \hL^\dagger_\alpha\hL_\alpha$ generate under multiplication and linear combination the entire operator algebra of the system \footnote{Importantly, we consider the associative $\CC$-algebra $\W$ generated by $\{\hL_\alpha,\hK\}$ through multiplication and linear combination. Hence, $\W$ does not necessarily contain $\hL_\alpha^\dag$ or the Hamiltonian $\hH$, and it is not necessarily a $C^*$-algebra.}. From this, one can deduce less general criteria employing the commutant of a subset of operators from the algebra that $\hL_\alpha$ and $\hK$ generate, including Frigerio's second theorem \cite{Frigerio1978-63} (Corollary~\ref{coro:Frigerio2} below).
Examples for two-level systems are presented in Section~\ref{sec:examples-qubit}, demonstrating that a change of Hamiltonian terms as well as the addition or removal of dissipators can render a reducible system irreducible and vice versa.
Some applications to many-body systems are discussed in Sections~\ref{sec:examples-mb1} and \ref{sec:examples-mb2} showing, for example, that one-dimensional many-body models can be rendered irreducible by having suitable dissipators on just one or two sites. This includes a corrected argument for the uniqueness of the steady state in XXZ spin-1/2 chains with boundary driving $\hL_1\propto \hsigma^+_1$ and $\hL_2\propto \hsigma^-_N$ at the two ends as considered in Refs.~\cite{Prosen2012-86,Prosen2015-48} and subsequent works.
In Section~\ref{sec:stabilization}, we show that a criterion for the driven-dissipative stabilization of pure states follows from the main result, and we argue that pure Fermi-sea states cannot be stabilized when all dissipators are spatially local.

\section{Motivation: Evans theorem and irreducibility}\label{sec:motivation}
In 1977, Evans published a paper discussing the irreducibility of quantum dynamical semigroups \cite{Evans1977-54} generated by Liouville super-operators (a.k.a.\ Liouvillians) $\L$ as characterized by Eq.~\eqref{eq:QME}. 
For the following, let  $\H$ denote the Hilbert space of the system, which may be finite or infinite-dimensional. $\A(\H)$ denotes the set of all linear operators on $\H$, and recall that the \emph{commutant} $\mc{X}'$ of a set of operators $\mc{X}\subseteq\A(\H)$ is defined as the set of operators $\mc{Y}\subseteq\A(\H)$ such that $[\hat{X},\hat{Y}]=0$ $\forall \hat{X}\in\mc{X}$ if and only if $\hat{Y}\in\mc{Y}$. Furthermore, recall that an observable $\hO$ is \emph{conserved} if $\L^\dag(\hO)=0$ such that, in the Heisenberg picture with $\Tr[\hB\L(\hA)]\equiv \Tr[\L^\dag(\hB)\hA]$ and $\hO(t)=e^{\L^\dag t}\hO$, we have
\begin{equation}
	\partial_t\hO(t)=\L^\dag\big(\hO(t)\big)=0.
\end{equation}

Evans established a theorem connecting the algebraic properties of $\hL_\alpha$, $\hH$ and the conservation of projection operators (theorem~4.1 in Ref.~\cite{Evans1977-54}):
\begin{theorem} [Evans theorem]
  \label{theorem:Evans}
  For a Markovian system \eqref{eq:QME}, a projection operator $\hP=\hP^\dag=\hP^2$ is conserved if and only if $\hP$ is an element of the commutant $\{\hL_\alpha, \hL_\alpha^\dag,\hH\}'$.
\end{theorem}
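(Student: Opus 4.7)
The plan is to prove both directions directly by algebraic manipulation, using the projector identity $\hP^2=\hP$ and the fact that complementary projectors are conserved simultaneously. The ``if'' direction is immediate: if $\hP$ commutes with each $\hL_\alpha$, each $\hL_\alpha^\dag$, and $\hH$, then every term in
\begin{equation*}
  \L^\dag(\hP)=\mri[\hH,\hP]+\sum_\alpha\Bigl(\hL_\alpha^\dag \hP \hL_\alpha-\tfrac{1}{2}\{\hL_\alpha^\dag \hL_\alpha,\hP\}\Bigr)
\end{equation*}
vanishes or collapses to $\hL_\alpha^\dag \hL_\alpha \hP - \hL_\alpha^\dag \hL_\alpha \hP = 0$. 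I would dispose of this in a single line before turning to the nontrivial direction.

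For the ``only if'' direction, assume $\L^\dag(\hP)=0$. The key observation is that sandwiching by $\hP$ annihilates the Hamiltonian contribution: using $\hP^2=\hP$, one gets $\hP\,\mri[\hH,\hP]\,\hP=\mri(\hP\hH\hP-\hP\hH\hP)=0$. Hence $\hP\L^\dag(\hP)\hP=0$ reduces purely to the dissipator part. Writing $\hP^\perp:=\id-\hP$ and inserting $\hP+\hP^\perp$ between $\hL_\alpha^\dag$ and $\hL_\alpha$ in the anticommutator term, the surviving expression telescopes into a sum of manifestly negative-semidefinite contributions,
\begin{equation*}
  \hP\,\L^\dag(\hP)\,\hP=-\sum_\alpha (\hP^\perp\hL_\alpha\hP)^\dag(\hP^\perp\hL_\alpha\hP)=0,
\end{equation*}
which forces $\hP^\perp\hL_\alpha\hP=0$ for every $\alpha$.

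Next I would exploit that $\L^\dag(\id)=0$ automatically, so $\L^\dag(\hP^\perp)=-\L^\dag(\hP)=0$ and $\hP^\perp$ is itself a conserved projection. Repeating the previous argument verbatim with $\hP$ replaced by $\hP^\perp$ yields $\hP\hL_\alpha\hP^\perp=0$. Combining the two identities gives $\hL_\alpha=\hP\hL_\alpha\hP+\hP^\perp\hL_\alpha\hP^\perp$, which is equivalent to $[\hP,\hL_\alpha]=0$; taking the adjoint yields $[\hP,\hL_\alpha^\dag]=0$ as well.

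Finally, substituting $[\hP,\hL_\alpha]=[\hP,\hL_\alpha^\dag]=0$ back into $\L^\dag(\hP)=0$ eliminates the entire dissipator sum, leaving $\mri[\hH,\hP]=0$, i.e.\ $\hP\in\{\hL_\alpha,\hL_\alpha^\dag,\hH\}'$. I expect the main obstacle to be spotting the sandwich trick: before one realizes that multiplying $\L^\dag(\hP)=0$ by $\hP$ on both sides kills the Hamiltonian commutator, the equation mixes the Hamiltonian and dissipative contributions in a way that is not obviously decoupled. Everything after that step is a bookkeeping exercise in rewriting positive-semidefinite sums and exploiting $\hP^2=\hP$.
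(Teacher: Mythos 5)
Your proof is correct, and every step checks out: the sandwich identity $\hP\,\L^\dag(\hP)\,\hP=-\sum_\alpha(\hP^\perp\hL_\alpha\hP)^\dag(\hP^\perp\hL_\alpha\hP)$ is right (the Hamiltonian commutator is indeed killed by $\hP\cdot\hP$, and the anticommutator contributes $-2\hP\hL_\alpha^\dag\hL_\alpha\hP$ while the jump term contributes $\hP\hL_\alpha^\dag\hP\hL_\alpha\hP$, leaving exactly $-\hP\hL_\alpha^\dag\hP^\perp\hL_\alpha\hP$), and the passage to $\hP^\perp$ via unitality, $\L^\dag(\id)=0$, legitimately supplies the other off-diagonal block. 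The route differs from the paper's in how the positive-semidefinite sum is produced. The paper evaluates the dissipation-function combination $\L^\dag(\hP^\dag\hP)-\L^\dag(\hP^\dag)\hP-\hP^\dag\L^\dag(\hP)$, whose Hamiltonian part cancels identically for \emph{any} operator, and obtains $\sum_\alpha[\hP,\hL_\alpha]^\dag[\hP,\hL_\alpha]=0$ in a single stroke, yielding the full commutator $[\hP,\hL_\alpha]=0$ at once; your version is essentially the $\hP\,(\cdot)\,\hP$ compression of that identity, which only sees the block $\hP^\perp\hL_\alpha\hP$ and therefore needs a second pass with the complementary projector. What your approach buys is elementarity—no need to know or verify the dissipation identity, only the idempotence of $\hP$ and unitality of $\L^\dag$—at the cost of an extra ingredient (conservation of $\hP^\perp$) and a two-stage argument; the paper's approach is more economical and generalizes beyond projections, since the same identity shows that any fixed point $X$ of $\L^\dag$ with $\L^\dag(X^\dag X)=X^\dag X\cdot 0$ must commute with all $\hL_\alpha$, which is the standard route to the algebraic structure of the fixed-point set. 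Both arguments are complete and correct.
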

\begin{proof} 
The `if' part of the theorem is obvious. If $\hP$ commutes with all $\hL_\alpha$, $\hL_\alpha^\dag$, and $\hH$, then
\begin{align}\label{eq:LME-Heisenberg}
	\L^\dag(\hP)\stackrel{\eqref{eq:QME}}{=}
	\mri[\hH,\hP]+\sum_{\alpha}\left(\hL^\dag_\alpha\hP \hL_\alpha-\frac{1}{2}\{\hL_\alpha^\dag \hL_\alpha,\hP\}\right)=\sum_{\alpha}\left(\hP\hL^\dag_\alpha \hL_\alpha-\hP\hL^\dag_\alpha \hL_\alpha\right)=0.
\end{align}
To prove the `only if' part of the theorem, we assume $\L^\dag(\hP)=0$ and use $\hP=\hP^\dag=\hP^2$ to find that
\begin{align}
	0=\L^\dag(\hP^\dag\hP)-\L^\dag(\hP^\dag)\hP-\hP^\dag\L^\dag(\hP)=\sum_\alpha (\hP\hL_\alpha-\hL_\alpha\hP)^\dag(\hP\hL_\alpha-\hL_\alpha\hP).
\end{align}
As the right-hand side is a sum of positive semidefinite operators, This leads to $[\hP,\hL_\alpha]=0$ $\forall \alpha$. Since $\hP$ is Hermitian, $[\hP,\hL^\dag_\alpha]=0$ $\forall \alpha$ follows immediately. Inserting this into Eq.~\eqref{eq:LME-Heisenberg}, we finally have $0=\L^\dag(\hP)=\mri[\hH,\hP]$. Thus, we conclude that $\hP\in\{\hL_\alpha,\hL_\alpha^\dag,\hH\}'$ if $\L^\dag(\hP)=0$.
\end{proof}

An alternative proof of this result can be found in Ref.~\cite{Baumgartner2008-41} (lemma~7).
Now, Evans defined the absence of non-trivial conserved projection operators $\hP\neq 0, \id$ as irreducibility of the Liouvillian. Previous works by Davies \cite{Davies1970-19} established a connection between irreducibility and the occurrence of a unique full-rank steady state; cf.\ Theorem~\ref{theorem:faithful} below. In the recent physics literature, these results are sometimes (mis)interpreted as implying that the steady state is unique if and only if the set of operators $\{\hL_\alpha,\hL_\alpha^\dag,\hH\}$ generates the entire operator algebra $\A(\H)$ for the Hilbert space $\H$ of the system. See, e.g., Refs.~\cite{Popkov2012-12,Popkov2013-15,Prosen2015-48,Manzano2018-67,Nigro2019-4}.

However, this is generally not true, and the following Liouvillian for a two-spin system provides a counter-example. Its many-body version has been used in Ref.~\cite{Barthel2020_12} to illustrate a no-go result for driven-dissipative phase transitions, and it has also been studied in a different context in Ref.~\cite{Lenarcic2020-125}.
\begin{example}[Two-site ferromagnet]
\label{example1}
Consider two spins-$1/2$ with no Hamiltonian and the Lindblad operators
\begin{align}
	\hL_1=\hsigma_1^z,\quad
	\hL_2=\hsigma_2^z,\quad
	\hL_{12}^+=\hP_1^\up\hsigma_2^+,\quad
	\hL_{12}^-=\hP_1^\down\hsigma_2^-,\quad
	\hL_{21}^+=\hP_2^\up\hsigma_1^+,\quad
	\hL_{21}^-=\hP_2^\down\hsigma_1^-,
\end{align}
where $\{\hsigma^a\}$ are Pauli matrices, $\hP^\up\equiv|\up\ket\bra\up|$, $\hP^\down\equiv|\down\ket\bra\down|$, $\hsigma^-\equiv|\down\ket\bra\up|$, and $\hsigma^+\equiv|\up\ket\bra\down|$.
The two pure states $|\up\up\ket\bra\up\up|$ and $|\down\down\ket\bra\down\down|$ are both steady states of the Liouvillian. Further, we observe that
\begin{align}
	\hL_{12}^++\hL_{12}^{-\dag}=\hsigma_2^+,\quad 
	\hL_{12}^-+\hL_{12}^{+\dag}=\hsigma_2^-,\quad 
	\hL_{21}^++\hL_{21}^{-\dag}=\hsigma_1^+,\quad 
	\hL_{21}^-+\hL_{21}^{+\dag}=\hsigma_1^-.
\end{align}
Combining this with the fact that the commutation operation is a linear operation, one has
\begin{align}
	\{\hsigma_1^-,\hsigma_1^+,\hsigma_1^z,\hsigma_2^-\,\hsigma_2^+,\hsigma_2^z\}'\subseteq \{\hL_\alpha^\pdag, \hL_\alpha^\dag\}'.
\end{align}
Finally, since the associative $\CC$-algebra generated by the elements of the left set is the full operator space $\A(\H)$, Schur's lemma gives
\begin{align}
	\{\hL_\alpha, \hL_\alpha^\dag\}'=\{z\id\,|\, z\in\CC\}.
\end{align}
In conclusion, we have constructed a Liouvillian with multiple steady states but no nontrivial conserved projection operators.
\end{example}

We will describe further counter-examples in Sec.~\ref{sec:examples-qubit}. As discussed in the following, the apparent contradiction arises from a distinction between the notions of irreducibility considered by Evans and Davies.

Note also that, according to theorem~1 in Ref.~\cite{Frigerio1977-2}, there actually is a relation between the commutant $\{\hL_\alpha, \hL_\alpha^\dag,\hH\}'$ and the uniqueness of steady states.
\begin{theorem}[Frigerio's first theorem]\label{theorem:Frigerio1}
  If a Markovian quantum system \eqref{eq:QME} has a faithful (i.e., full-rank) steady state, then $\{\hL_\alpha, \hL_\alpha^\dag,\hH\}'=\{z\id\,|\, z\in\CC\}$ implies uniqueness of the steady state.
\end{theorem}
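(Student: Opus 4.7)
The plan is to work in the Heisenberg picture and show that the fixed-point space $\mc{F}=\ker\L^\dag$ is contained in the commutant $\{\hL_\alpha,\hL_\alpha^\dag,\hH\}'$, so that triviality of the commutant forces $\mc{F}=\mathbb{C}\id$, from which uniqueness of the Schrödinger steady state follows by duality. The faithful steady state $\dm_*$ is used exactly once, as a witness that enforces commutation of fixed observables with the Lindblad operators.

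Concretely, I would first establish the \emph{dissipation identity}
\begin{equation}
\L^\dag(\hA^\dag \hA)-\hA^\dag\L^\dag(\hA)-\L^\dag(\hA^\dag)\hA=\sum_\alpha [\hL_\alpha,\hA]^\dag[\hL_\alpha,\hA],
\end{equation}
which follows by expanding the definition of $\L^\dag$ from \eqref{eq:LME-Heisenberg} and noting that the Hamiltonian commutator contribution cancels by the Leibniz rule. For any $\hA\in\mc{F}$ one has $\L^\dag(\hA)=0$, and also $\L^\dag(\hA^\dag)=\bigl(\L^\dag(\hA)\bigr)^\dag=0$ since $\L^\dag$ commutes with Hermitian conjugation, so the identity collapses to $\L^\dag(\hA^\dag\hA)=\sum_\alpha[\hL_\alpha,\hA]^\dag[\hL_\alpha,\hA]\ge 0$.

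Next I would use the faithful steady state. Pairing with $\dm_*$ in the duality between observables and states gives
\begin{equation}
\Tr\bigl(\dm_* \L^\dag(\hA^\dag\hA)\bigr)=\Tr\bigl(\L(\dm_*)\,\hA^\dag\hA\bigr)=0.
\end{equation}
Thus $\Tr\bigl(\dm_* \sum_\alpha[\hL_\alpha,\hA]^\dag[\hL_\alpha,\hA]\bigr)=0$, and because $\dm_*$ has full rank the positive-semidefinite sum must vanish, yielding $[\hL_\alpha,\hA]=0$ for all $\alpha$. Taking adjoints, $[\hL_\alpha^\dag,\hA]=0$ as well, and substituting back into $\L^\dag(\hA)=0$ leaves $\mri[\hH,\hA]=0$. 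Hence $\mc{F}\subseteq\{\hL_\alpha,\hL_\alpha^\dag,\hH\}'$.

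Under the hypothesis $\{\hL_\alpha,\hL_\alpha^\dag,\hH\}'=\mathbb{C}\id$, this forces $\mc{F}=\mathbb{C}\id$, i.e.\ the Heisenberg-picture fixed-point space is one-dimensional. Finally, invoking $\dim\ker\L=\dim\ker\L^\dag$ (which holds because $\L^\dag$ is the adjoint of $\L$ with respect to the Hilbert–Schmidt inner product), one concludes that the Schrödinger-picture fixed-point space is also one-dimensional, so the steady state is unique. The main obstacle is the second step: one must justify using $\dm_*$ to extract vanishing of the positive operator $\sum_\alpha[\hL_\alpha,\hA]^\dag[\hL_\alpha,\hA]$, which is precisely where faithfulness (full rank) of the invariant state is indispensable — without it, only the support of $\dm_*$ can be reached and the commutation relations need not hold on the whole Hilbert space, matching the behavior seen in Example~\ref{example1}.
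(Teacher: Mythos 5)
The paper states Frigerio's first theorem without proof, deferring to Refs.~\cite{Frigerio1977-2,Frigerio1978-63,Spohn1980-52}, so there is no in-paper argument to compare against; judged on its own, your proof is correct and is essentially the standard one from those references. The dissipation identity you invoke is exactly the computation the paper itself performs (for the special case $\hA=\hP$) in the proof of Evans' theorem, and your use of the faithful state $\dm_*$ to upgrade $\Tr\bigl(\dm_*\sum_\alpha[\hL_\alpha,\hA]^\dag[\hL_\alpha,\hA]\bigr)=0$ to $[\hL_\alpha,\hA]=0$ is precisely where full rank enters. One small imprecision: taking the adjoint of $[\hL_\alpha,\hA]=0$ yields $[\hL_\alpha^\dag,\hA^\dag]=0$, not $[\hL_\alpha^\dag,\hA]=0$; to get the latter you should note that $\ker\L^\dag$ is $*$-closed (as you already used), apply the same positivity argument to $\hA^\dag\in\ker\L^\dag$ to obtain $[\hL_\alpha,\hA^\dag]=0$, and then take adjoints. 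With that repair, $\ker\L^\dag\subseteq\{\hL_\alpha,\hL_\alpha^\dag,\hH\}'=\mathbb{C}\id$ follows as you say. The final duality step $\dim\ker\L=\dim\ker\L^\dag$ is clean for $\dim\H<\infty$ (and then a one-dimensional $\ker\L$ containing the density operator $\dm_*$ contains no other state); in the infinite-dimensional $W^*$ setting of Frigerio's original paper this step is replaced by a mean-ergodic argument conditioned on the faithful normal invariant state, which is the main point where your finite-dimensional shortcut and the cited proofs genuinely diverge.
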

See also Refs.~\cite{Spohn1980-52}. However, due to the important precondition that a faithful steady state exists, this criterion is usually not possible to assess for many-body systems.

\section{Davies irreducibility and Evans irreducibility}\label{sec:distinction}
Evans irreducibility \cite{Evans1977-54} and Davies irreducibility \cite{Davies1970-19} can be defined as follows \cite{Carbone2016-77,Frigerio1977-2,Evans1978-2-17}.
\begin{definition}[Evans irreducibility and Davies irreducibility]\label{def:reducibility}
  Let $\E$ be a quantum channel ($\E=e^{\L t}$ for Markovian systems) \footnote{A \emph{quantum channel} is a completely positive trace-preserving map between two operator spaces -- the most general type of discrete-time evolution for quantum systems \cite{Nielsen2000}.}. We say that a projection operator $\hP$
  \begin{align}
  	\label{eq:EvansReduce}
  	\text{Evans-reduces $\E$ if}\quad &\E^\dag(\hP)=\hP\quad\text{and}\\
  	\label{eq:DaviesReduce}
  	\text{(Davies-)reduces $\E$ if}\quad &\E(\hP\dm\hP)=\hP\E(\hP\dm\hP)\hP\ \ \forall\dm.
  \end{align}
  If the only reducing projection operators are $0$ and $\id$, we call the system Evans irreducible or (Davies) irreducible, respectively.
\end{definition}
\noindent
Alluding to (sub)harmonic functions in the potential theory of classical Markov semigroups \cite{Jacob2002}, projections that Evans or Davies-reduce $\mathcal{E}$ are also referred to as harmonic and subharmonic projections, respectively \cite{Fagnola2002-43,Umanita2005-134,Deschamps2016-28,Carbone2016-77}.
Evans irreducibility can also be characterized in terms of $\E$ instead of the adjoint channel $\E^\dag$:
\begin{proposition}[Evans-reducibility in the Schr\"{o}dinger picture]\label{prop:EvansSchroedinger}
  Evans reducibility \eqref{eq:EvansReduce} is equivalent to the existence of a projection operator $\hP\neq 0,\id$ such that $\E(\hP\dm\hP)=\hP\E(\dm)\hP$ $\forall\dm$.
\end{proposition}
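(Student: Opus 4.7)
The plan is to prove the equivalence by establishing each implication separately. The forward direction will follow from trace manipulations requiring only trace preservation and $\hP^2=\hP$, while the converse genuinely needs complete positivity of $\E$ and will be handled through a Kraus-operator argument.

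For the forward direction ($\E(\hP\dm\hP)=\hP\E(\dm)\hP\ \forall\dm \Rightarrow \E^\dag(\hP)=\hP$), I would chain five equalities for arbitrary $\dm$:
\begin{align*}
\Tr[\dm\,\E^\dag(\hP)] = \Tr[\hP\E(\dm)] = \Tr[\hP\E(\dm)\hP] = \Tr[\E(\hP\dm\hP)] = \Tr[\hP\dm\hP] = \Tr[\dm\hP],
\end{align*}
invoking, in order, the definition of the Hilbert--Schmidt adjoint, cyclicity with $\hP^2=\hP$, the hypothesis, trace preservation of $\E$, and cyclicity again. Since $\dm$ is arbitrary and both $\hP$ and $\E^\dag(\hP)$ are Hermitian, this forces $\E^\dag(\hP)=\hP$.

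For the converse, I would fix a Kraus representation $\E(\dm)=\sum_k \hA_k\dm\hA_k^\dag$ with $\sum_k \hA_k^\dag\hA_k=\id$, noting that $\E^\dag(\hP)=\hP$ together with the unitality of $\E^\dag$ also yields $\E^\dag(\hP^\perp)=\hP^\perp$ for $\hP^\perp\equiv\id-\hP$. The key step is to sandwich the two identities $\sum_k \hA_k^\dag \hP^\perp\hA_k=\hP^\perp$ and $\sum_k \hA_k^\dag \hP\hA_k=\hP$ between $\hP$ and $\hP^\perp$ respectively, producing
\begin{align*}
\sum_k (\hP^\perp\hA_k\hP)^\dag(\hP^\perp\hA_k\hP)=0\quad\text{and}\quad \sum_k (\hP\hA_k\hP^\perp)^\dag(\hP\hA_k\hP^\perp)=0.
\end{align*}
Since each summand is positive semidefinite, every term must vanish, giving $\hP^\perp\hA_k\hP=0=\hP\hA_k\hP^\perp$ for all $k$, i.e., $[\hA_k,\hP]=0$. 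The desired relation then follows by direct computation: $\E(\hP\dm\hP)=\sum_k\hA_k\hP\dm\hP\hA_k^\dag=\hP\sum_k\hA_k\dm\hA_k^\dag\hP=\hP\E(\dm)\hP$.

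The main (and only nontrivial) obstacle is the converse direction, where complete positivity enters essentially through the existence of a Kraus decomposition and through the standard fact that a sum of positive semidefinite operators vanishes only when each summand does. The forward direction, by contrast, uses only linearity and trace preservation, and would work for any trace-preserving linear map.
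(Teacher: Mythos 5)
Your proof is correct. The direction $\E^\dag(\hP)=\hP\Rightarrow\E(\hP\dm\hP)=\hP\E(\dm)\hP$ is essentially the paper's: both arguments force $[\hA_k,\hP]=0$ from a vanishing sum of positive semidefinite operators built out of the Kraus operators; you get the two off-diagonal blocks by compressing $\E^\dag(\hP^\perp)=\hP^\perp$ with $\hP$ and $\E^\dag(\hP)=\hP$ with $\hP^\perp$, whereas the paper expands the single sum $\sum_k(\hP\hM_k^\dag-\hM_k^\dag\hP)(\hM_k\hP-\hP\hM_k)=0$ — a cosmetic reorganization. Where you genuinely diverge is the other direction: the paper again works with a Kraus decomposition, testing the hypothesis on rank-one states $|\psi\ket\bra\psi|$ with $\psi$ in the image of $\hP$ and in its orthogonal complement to deduce $(\id-\hP)\hM_k\hP=\hP\hM_k(\id-\hP)=0$, and only then concludes $\E^\dag(\hP)=\hP$. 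Your trace-duality chain $\Tr[\dm\,\E^\dag(\hP)]=\Tr[\hP\E(\dm)\hP]=\Tr[\E(\hP\dm\hP)]=\Tr[\dm\hP]$ bypasses Kraus operators and positivity entirely in this direction and, as you note, applies to any trace-preserving linear map; the only thing it does not deliver is the block-diagonality of the Kraus operators, which is recovered anyway when the implication is run the other way, so nothing is lost. The one step worth making explicit is the final inference ``$\Tr[\dm\,\E^\dag(\hP)]=\Tr[\dm\hP]$ for all $\dm$ implies $\E^\dag(\hP)=\hP$'': this holds because density operators span the space of Hermitian operators and $\E^\dag(\hP)-\hP$ is Hermitian ($\E^\dag$ is a positive, hence Hermiticity-preserving, map), which is what your parenthetical Hermiticity remark is implicitly invoking.
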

\begin{proof}
With Kraus operators $\hM_k$, any quantum channel $\E$ has an operator-sum representation (a.k.a.\ Kraus decomposition) \cite{Kraus1983,Nielsen2000}
\begin{align}\label{eq:Kraus}
	\E(\dm)=\sum_k\hM_k\dm\hM_k^\dag\quad \text{with}\quad \sum_k\hM_k^\dag\hM_k=\id.
\end{align}
To prove the equality in the proposition from Evans reducibility \eqref{eq:EvansReduce}, note that if $\E^\dag(\hP)=\sum_k\hM_k^\dag\hP\hM_k=\hP$, then
\begin{align}
	&\sum_k \big[ (\hP\hM_k^\dag-\hM_k^\dag\hP) (\hM_k\hP-\hP\hM_k) \big] \nonumber \\
	=&\sum_k( \hP\hM_k^\dag\hM_k\hP+\hM_k^\dag\hP\hP\hM_k- \hM_k^\dag\hP\hM_k\hP-\hP\hM_k^\dag\hP\hM_k )
	=\hP+\hP-\hP-\hP=0.
\end{align}
As the left-hand side is a sum of positive semidefinite operators, we have $[\hP,\hM_k]=[\hP,\hM_k^\dag]=0$, and $\E(\hP\dm\hP)=\hP\E(\dm)\hP$ follows directly by commuting $\hP$ with $\hM_k$ and $\hM_k^\dag$.

To prove the other direction of the proposition, let the image of the projection operator $\hP$ be the subspace $\V\subseteq\H$ and $\V_\perp$ its orthogonal complement (the kernel of $\hP$).
Choosing $\dm=|\psi\ket\bra\psi|=\hP|\psi\ket\bra\psi|\hP$ for any state $\psi\in\V$, the equality $\E(\hP\dm\hP)=\hP\E(\dm)\hP$ gives
\begin{align}
	\sum_k\hM_k|\psi\ket\bra\psi|\hM_k^\dag
	= \sum_k\hP \hM_k|\psi\ket\bra\psi|\hM_k^\dag \hP
	\quad\Rightarrow\quad
	\sum_k |\bra\phi|\hM_k|\psi\ket|^2=0\quad \forall\phi\in\V_\perp
\end{align}
as $\bra\phi|\hP\hM_k|\psi\ket=0$.
Thus, $(\id-\hP) \hM_k \hP=0$. Now, choose $\dm=|\phi\ket\bra\phi|$ for any state $\phi\in\V_\perp$. Then,
\begin{align}
	0=\sum_k\hM_k\hP|\phi\ket\bra\phi|\hP\hM_k^\dag
	= \sum_k\hP \hM_k|\phi\ket\bra\phi|\hM_k^\dag \hP
	\quad\Rightarrow\quad
	\sum_k |\bra\psi|\hM_k|\phi\ket|^2=0\quad \forall\psi\in\V.
\end{align}
Thus, $\hP \hM_k (\id-\hP)=0$. In other words, all Kraus operators $\hM_k$ are block-diagonal and thus commute with $\hP$ such that $\E^\dag (\hP)=\sum_k\hM_k^\dag\hP\hM_k=\sum_k\hM_k^\dag\hM_k\hP=\hP$.
\end{proof}

So, an Evans-reducible system has non-trivial subspaces that evolve independently of each other. One easily sees the following direct implications of Davies and Evans (ir)reducibility.
\begin{proposition}[Implications of Davies and Evans reducibility]\label{prop:InterpretDaviesEvans}
  (a) Davies irreducibility leads to Evans irreducibility but not vice versa. Equivalently, Evans reducibility implies Davies reducibility.
  (b) Davies reducibility is equivalent to the existence of a nontrivial invariant subspace $\V\neq \{0\},\H$.
  (c) For a Markovian quantum system \eqref{eq:QME}, Evans reducibility of $\E=e^{\L t}$ is equivalent to the presence of a strong symmetry.
\end{proposition}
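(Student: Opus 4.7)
The plan is to treat the three parts sequentially, each being a short argument built on the tools already in the excerpt.

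For part (a), the equivalent formulation ``Evans reducibility implies Davies reducibility'' is most convenient. Given a nontrivial projection $\hP$ that Evans-reduces $\E$, I apply Proposition~\ref{prop:EvansSchroedinger} to obtain $\E(\hP\dm\hP)=\hP\E(\dm)\hP$ for every $\dm$. Substituting $\dm\mapsto\hP\dm\hP$ and using $\hP^2=\hP$ yields $\E(\hP\dm\hP)=\hP\E(\hP\dm\hP)\hP$, which is exactly the Davies condition~\eqref{eq:DaviesReduce}; so the very same projection that Evans-reduces also Davies-reduces. For the ``not vice versa'' part, I would point to Example~\ref{example1}: the pure states $|\up\up\ket\bra\up\up|$ and $|\down\down\ket\bra\down\down|$ are steady states, hence their one-dimensional supports are nontrivial invariant subspaces (Davies reducibility, via part~(b)), while $\{\hL_\alpha,\hL_\alpha^\dag\}'$ was shown to be trivial so Theorem~\ref{theorem:Evans} precludes any nontrivial conserved projection and the system is Evans irreducible.

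Part (b) is essentially a translation between operator-level and subspace-level statements. Let $\V$ be the image of $\hP$. The positive operator $\hP\dm\hP$ is supported in $\V$ for every $\dm$, and Davies reducibility~\eqref{eq:DaviesReduce} says precisely that $\E(\hP\dm\hP)$ is again supported in $\V$; hence the set of density operators supported on $\V$ is invariant under $\E$. Conversely, if such a $\V$ is invariant, then for any $\dm$ the operator $\hP\dm\hP$ (after normalization) lies in this invariant set, so $\E(\hP\dm\hP)$ is supported in $\V$, giving~\eqref{eq:DaviesReduce}. Linearity extends the correspondence from density operators to arbitrary $\dm\in\B(\H)$.

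For part (c), I would chain Theorem~\ref{theorem:Evans} with the standard definition of a strong symmetry as a unitary $\hU\notin\mathbb{C}\id$ commuting with $\hH$ and every $\hL_\alpha$. Evans reducibility of $e^{\L t}$ means $e^{\L^\dag t}(\hP)=\hP$ for all $t\geq 0$ with some nontrivial projection $\hP$; differentiating at $t=0$ this is equivalent to $\L^\dag(\hP)=0$, and by Theorem~\ref{theorem:Evans} to $\hP\in\{\hL_\alpha,\hL_\alpha^\dag,\hH\}'$. Then the unitary $\hU:=\id-2\hP$ commutes with $\hH$ and all $\hL_\alpha$, furnishing a nontrivial strong symmetry. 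Conversely, taking adjoints of $[\hU,\hL_\alpha]=0$ gives $[\hU,\hL_\alpha^\dag]=0$, so every spectral projection of a strong-symmetry $\hU$ lies in $\{\hL_\alpha,\hL_\alpha^\dag,\hH\}'$, and at least one of them is nontrivial whenever $\hU\notin\mathbb{C}\id$, producing an Evans-reducing projection.

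The main obstacle is conceptual rather than technical, namely being precise in part~(b) about what ``invariant subspace'' means for a channel on density operators (the natural reading is that states supported on $\V$ remain supported on $\V$). Once this is pinned down, Proposition~\ref{prop:EvansSchroedinger} and Theorem~\ref{theorem:Evans} carry the weight and the remaining steps are routine.
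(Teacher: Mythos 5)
Your proposal is correct and follows essentially the same route as the paper's proof: part (a) via Proposition~\ref{prop:EvansSchroedinger} and restriction to $\dm=\hP\dm\hP$, part (b) as a direct translation between the projection and its image, and part (c) via the commutant characterization of conserved projections (your $\hU=\id-2\hP$ is just the paper's $e^{\mri\varphi\hP}$ at $\varphi=\pi$). You are in fact slightly more explicit than the paper in naming Example~\ref{example1} as the counter-example for ``not vice versa'' and in spelling out the converse of (b).
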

\begin{proof}
For the first direction of statement (a), note that if $\hP$ Evans-reduces $\E$, we have $\E(\hP\dm\hP)=\hP\E(\dm)\hP$ for all density operators $\dm$ (Proposition~\ref{prop:EvansSchroedinger}). Then, the equation will also hold for all density operators with support in the image of $\hP$, i.e., if $\dm=\hP\dm\hP$, which implies that $\hP$ Davies-reduces $\E$. For the second direction, one can easily construct counter-examples.
Statement (b) is a rephrasing of Eq.~\eqref{eq:DaviesReduce}: Let $\hP$ Davies-reduce $\E$ and $\V\subseteq\H$ be the image of $\hP$. For any $\dm$ with $\supp \dm\subseteq\V$, we have $\supp\E(\dm)\subseteq \V$, i.e., $\V$ is an invariant subspace.
For (c), recall that a strong symmetry of a Markovian system is associated with a unitary $\hU$ that commutes with the Hamiltonian $\hH$ and all Lindblad operators $\{\hL_\alpha\}$ \cite{Buca2012-14,Albert2014-89}. If $\hP$ Evans-reduces $\E=e^{\L t}$, then $\hP$ generates the strong symmetry with respect to $\hU=e^{\mri\varphi \hP}$. Conversely, for a system with a strong symmetry $\hU=e^{\mri\varphi\hat{J}}$, the generator $\hat{J}$ commutes with $\hH$ and all $\{\hL_\alpha\}$. Let the generator have the sprectral decomposition $\hat{J}=\sum_k j_k \hP_k$. Because of simultaneous diagonaliablity, it then follows that the eigenspace projections $\hP_k$ also commute with $\{\hH,\hL_\alpha\}$ and, hence, Evans-reduce the dynamics.
\end{proof}

As discussed in Refs.~\cite{Davies1970-19,Evans1978-2-17,Schrader2001-30,Umanita2005-134,Carbone2016-77} there is a close connection between Davies irreducibility and the uniqueness and faithfulness of steady states.
\begin{theorem}[Davies irreducibility and unique steady states]\label{theorem:faithful}
  (a) For $\dim\H<\infty$, a quantum channel is Davies irreducible if and only if it has a unique faithful (full-rank) steady state.
  (b) For $\dim\H=\infty$, every Davies irreducible channel either has a unique faithful steady state or it has no steady state at all.
\end{theorem}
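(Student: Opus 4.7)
My plan is to treat part (a) by establishing existence, faithfulness, and uniqueness of the steady state under Davies irreducibility, and then to verify the converse by contradiction. For the forward direction, existence of at least one steady state $\dm_0$ follows from the Brouwer (or Markov--Kakutani) fixed-point theorem applied to the continuous map $\E$ on the compact convex set of density operators on $\H$. For faithfulness, the key observation is that every vector $|\psi\ket\in\supp\dm_0$ satisfies $|\psi\ket\bra\psi|\le c\,\dm_0$ for some $c>0$, and then positivity of $\E$ gives $\supp\E(|\psi\ket\bra\psi|)\subseteq\supp\dm_0$. Extending by spectral decomposition of an arbitrary $\hP\dm\hP$ into rank-one contributions, where $\hP$ is the support projection of $\dm_0$, shows that $\hP$ Davies-reduces $\E$, so Davies irreducibility forces $\hP=\id$. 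For uniqueness, suppose a second steady state $\dm_1\ne\dm_0$ and consider the affine family $\dm_t:=\dm_0+t(\dm_1-\dm_0)$ of self-adjoint trace-one steady states, with $t_*=\sup\{t\ge 0\,|\,\dm_t\ge 0\}$; since $\dm_1-\dm_0$ is nonzero and traceless, the ratio $\dm_t/t$ tends to $\dm_1-\dm_0$ as $t\to\infty$, which cannot be positive, so $t_*$ is finite. The operator $\dm_{t_*}$ then lies on the boundary of the positive cone, is a density operator and a steady state, but is not faithful, contradicting the faithfulness conclusion above.

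For the converse of (a), I argue by contraposition. If some projection $\hP\ne 0,\id$ Davies-reduces $\E$, then by Proposition~\ref{prop:InterpretDaviesEvans}(b) its image $\V$ is an invariant subspace, so any density operator supported in $\V$ is mapped to another such density operator. The compact convex set of $\V$-supported density operators is therefore preserved by $\E$, and the Brouwer theorem again yields a steady state whose support lies strictly inside $\H$. This steady state is not faithful, so there cannot exist a unique faithful steady state, contradicting the hypothesis.

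For part (b), the faithfulness argument carries through in the trace-class setting essentially unchanged, with trace-norm continuity of $\E$ closing the gap between finite-rank and general positive inputs in the rank-one decomposition. The uniqueness argument in the finite-dimensional form becomes delicate because the positive trace-class cone has boundary points that are strictly positive (compact operators with zero merely an accumulation point of the spectrum); the standard way out is to analyze the invariant-cone structure directly, showing that any putative second steady state yields a nontrivial Davies-reducing projection via its support. What fails entirely is the existence step, since the set of density operators is trace-norm closed but not compact in infinite dimensions, so a Davies-irreducible channel may admit no steady state at all---precisely the dichotomy of (b). The main technical obstacle I anticipate is the clean verification that the support projection of a steady state Davies-reduces the channel for every input $\dm$, not merely for those already supported on the relevant subspace; the rank-one decomposition handles this in finite dimensions, and trace-norm continuity is what extends it to the trace-class case.
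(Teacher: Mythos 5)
The paper does not prove Theorem~\ref{theorem:faithful} itself --- it defers part (a) to Evans--H{\o}egh-Krohn and Umanit\`a, part (b) to Schrader, and only remarks that the support of a steady state is an invariant subspace. Judged on its own, your part (a) is correct and complete: Brouwer for existence; the domination $|\psi\ket\bra\psi|\le c\,\dm_0$ for $\psi\in\supp\dm_0$ (valid when $\dim\H<\infty$) together with the rank-one decomposition for faithfulness; the line-segment argument for uniqueness; and Brouwer on the invariant subspace for the converse. This is essentially the standard argument behind the cited results.

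Part (b), however, is a list of difficulties rather than a proof, and both deferred steps are genuinely incomplete. For faithfulness, the inequality $|\psi\ket\bra\psi|\le c\,\dm_0$ fails for vectors in $\supp\dm_0$ (the closure of the range) lying outside the range of $\dm_0^{1/2}$ --- the gap is not ``finite-rank versus general inputs'' as you locate it. It can be closed by taking $\psi$ in the dense range and using that $\Tr[(\id-\hP)\E(X)]=0$ is trace-norm closed, but the cleaner, dimension-independent route is the Kraus argument: $0=\Tr[(\id-\hP)\E(\dm_0)]=\sum_k\Tr[(\id-\hP)\hM_k\dm_0\hM_k^\dag(\id-\hP)]$ forces $(\id-\hP)\hM_k\dm_0^{1/2}=0$, hence $(\id-\hP)\hM_k\hP=0$, which is exactly Davies reduction. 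For uniqueness, you correctly observe that $\dm_{t_*}$ can sit on the boundary of the trace-class cone while remaining injective, so the line-segment argument collapses; ``analyze the invariant-cone structure directly'' is not an argument. The standard repair is the Jordan decomposition $\dm_1-\dm_0=X_+-X_-$ with orthogonal support projections $\hP_\pm$: compressing the fixed-point equation with $\hP_+$ gives $\hP_+\E(X_+)\hP_+\ge X_+$, trace preservation then forces $\Tr[(\id-\hP_+)\E(X_+)]=0$, so $\supp\E(X_+)\subseteq\supp X_+$ and the same Kraus argument shows $\hP_+$ Davies-reduces $\E$. Irreducibility leaves $\hP_\pm\in\{0,\id\}$, while $\dm_1\neq\dm_0$ and $\Tr X_+=\Tr X_-$ force both $X_\pm\neq 0$ with orthogonal supports --- a contradiction. (This argument also replaces your line-segment step in finite dimensions.) Until these two steps are written out, part (b) remains unproved.
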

\noindent
Statement (a) corresponds to theorem~2.3 in Ref.~\cite{Evans1978-2-17} and corollary~1 in Ref.~\cite{Umanita2005-134}. Recall that, for $\dim\H<\infty$, every quantum channel $\E$ has at least one steady state according to Brouwer's fixed point theorem. Also, the support of any steady state is an invariant subspace \cite{Baumgartner2008-41,Carbone2016-77}.
So, a system with a rank-deficient steady state cannot be Davies irreducible.
Statement (b) was shown in theorem~3.1 of Ref.~\cite{Schrader2001-30}; see also proposition~3.2 in Ref.~\cite{Carbone2016-77}.

The system in Example~\ref{example1} only satisfies Evans irreducibility but not Davies irreducibility. Thus, Theorem~\ref{theorem:faithful} does not apply, and there is no contradiction in having two orthogonal steady states. The system has a weak symmetry \cite{Baumgartner2008-41,Buca2012-14,Albert2014-89}. Specifically, we have the weak $\mathbb{Z}_2$ symmetry $\L(\hU\hat{R}\hU^\dag)=\hU\L(\hat{R})\hU^\dag$, where $\hU=\hsigma^x_1\hsigma^x_2$ is a global spin flip \cite{Barthel2020_12}.

Lastly, for $\dim\H<\infty$, the uniqueness of the steady state implies that the system is relaxing \cite{Schirmer2010-81,Wolf2012}.
\begin{theorem}[Uniqueness of the steady state and relaxation]\label{theorem:relaxing}
  For $\dim \H<\infty$, every Markovian system \eqref{eq:QME} with a unique steady state $\dm_\text{ss}$ is \emph{relaxing}, i.e., $\lim_{t\to\infty}e^{\L t}\dm=\dm_\text{ss}$ for all density operators $\dm$.
\end{theorem}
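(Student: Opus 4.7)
My plan is to analyze $e^{\L t}$ via the Jordan decomposition of $\L$ on the finite-dimensional operator space $\B(\H)$, reducing relaxation to spectral properties of $\L$. Since each $e^{\L t}$ is a quantum channel, it is a contraction in trace norm, so $\|e^{\L t}\|$ is uniformly bounded. In finite dimension this constrains the spectrum of $\L$: every eigenvalue $\lambda$ must satisfy $\operatorname{Re}\lambda\le 0$, and every eigenvalue on the imaginary axis is semisimple (no Jordan blocks), since a nontrivial Jordan block would yield forbidden polynomial-in-$t$ growth. Writing
\begin{equation}
	e^{\L t}=\Pi_0+\sum_{\omega\neq 0}e^{\mri\omega t}\Pi_{\mri\omega}+R(t),
\end{equation}
where the middle sum runs over nonzero purely imaginary eigenvalues and $\|R(t)\|$ decays exponentially at the rate of the spectral gap, convergence $e^{\L t}\dm\to\dm_\text{ss}$ becomes equivalent to (a) the oscillating sum being empty and (b) $\Pi_0\dm=\dm_\text{ss}$ for every density operator $\dm$. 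Part (b) reduces to $\dim_\mathbb{C}\ker\L=1$ via trace-preservation of $\Pi_0$; uniqueness of the density-operator steady state combined with the invariant-subspace structure of Proposition~\ref{prop:InterpretDaviesEvans}(b) rules out additional Hermitian kernel elements, since any such would produce a second density-operator steady state via Brouwer's theorem.

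The main obstacle is (a). The plan is to first restrict $\L$ to the invariant subspace $\V=\supp\dm_\text{ss}$, on which $\dm_\text{ss}$ is faithful and unique. By Theorem~\ref{theorem:faithful}(a) the restricted semigroup is Davies irreducible, hence Evans irreducible by Proposition~\ref{prop:InterpretDaviesEvans}(a), and a short spectral-calculus argument applied to self-adjoint conserved observables upgrades this to $\{\hL_\alpha,\hL_\alpha^\dag,\hH\}'=\mathbb{C}\id$ on $\V$. Now suppose $\L^\dag\hat{X}=\mri\omega\hat{X}$ on $\V$ with $\omega\neq 0$. The dissipation identity
\begin{equation}
	\L^\dag(\hat{X}^\dag\hat{X})-\hat{X}^\dag\L^\dag(\hat{X})-\L^\dag(\hat{X})^\dag\hat{X}=\sum_\alpha[\hL_\alpha,\hat{X}]^\dag[\hL_\alpha,\hat{X}]\ge 0,
\end{equation}
combined with $\Tr(\dm_\text{ss}\L^\dag(\cdot))=0$ and faithfulness of $\dm_\text{ss}$, forces $[\hL_\alpha,\hat{X}]=[\hL_\alpha^\dag,\hat{X}]=0$, so the eigenequation reduces to $[\hH,\hat{X}]=\omega\hat{X}$. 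Then $\hat{X}\hat{X}^\dag$ lies in the trivial commutant $\mathbb{C}\id$, making $\hat{X}$ proportional to a unitary; iterating the commutator gives $\hat{X}^n\neq 0$ and $[\hH,\hat{X}^n]=n\omega\hat{X}^n$ for every $n$, contradicting finiteness of the spectrum of $\operatorname{ad}_\hH$ in finite dimension. Eigenmodes with components outside $\V$ are handled by the block structure of $\L$ relative to $\H=\V\oplus\V^\perp$: uniqueness of $\dm_\text{ss}$ forbids any invariant subspace in $\V^\perp$ that could carry its own fixed state, so these modes contribute only to the exponentially decaying remainder $R(t)$.

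The hardest single step is the passage from Davies irreducibility on $\V$ to the trivial-commutant statement $\{\hL_\alpha,\hL_\alpha^\dag,\hH\}'=\mathbb{C}\id$, together with the book-keeping that handles kernel and eigenspace contributions straddling $\V$ and $\V^\perp$. Once these algebraic reductions are in place, the peripheral-spectrum argument is essentially a counting argument on eigenvalues of the finite-dimensional operator $\operatorname{ad}_\hH$, and the remaining convergence follows from the exponential decay of $R(t)$.
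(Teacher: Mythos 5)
First, note that the paper does not prove this theorem at all --- it simply cites Schirmer and Wang \cite{Schirmer2010-81} --- so your attempt has to be judged on its own. Your overall architecture (contraction $\Rightarrow$ $\operatorname{Re}\lambda\le 0$ with semisimple peripheral eigenvalues; reduce relaxation to triviality of the peripheral point spectrum) is the right one, and the core of your argument on the faithful block is correct and complete in outline: restricting to $\V=\supp\dm_\text{ss}$, upgrading Evans irreducibility to $\{\hL_\alpha,\hL_\alpha^\dag,\hH\}'=\mathbb{C}\id$ via spectral projections of self-adjoint commutant elements, using the dissipation identity plus faithfulness to force $[\hL_\alpha,\hat X]=[\hL_\alpha^\dag,\hat X]=0$, and then deriving the contradiction from $[\hH,\hat X^n]=n\omega\hat X^n$ with $\hat X$ proportional to a unitary. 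That is essentially the classical Frigerio--Spohn argument and it works.

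The genuine gap is the last sentence of your second paragraph. Decomposing $\B(\H)$ into the blocks $\hP\B\hP$, $\hP\B\hP_\perp$, $\hP_\perp\B\hP$, $\hP_\perp\B\hP_\perp$ (with $\hP$ projecting onto $\V$), the generator is block-triangular, so its spectrum is the union of the spectra of the four diagonal blocks --- and you only control the first. For the $\hP_\perp\B\hP_\perp$ block, ``uniqueness forbids an invariant subspace in $\V^\perp$'' is the right idea but not yet a proof: the reduced generator $\L_{22}(\sigma)=\hK_{22}\sigma+\sigma\hK_{22}^\dag+\sum_\alpha \hC_\alpha\sigma\hC_\alpha^\dag$ generates a completely positive, trace-\emph{non-increasing} semigroup, and you must invoke Perron--Frobenius theory for positive maps \cite{Evans1978-2-17} to convert a hypothetical peripheral eigenvalue into a positive semidefinite fixed point $\sigma_0\ge 0$ of $\L_{22}$, then show (using $\Tr\L_{22}(\sigma_0)=-\sum_\alpha\Tr(\hB_\alpha^\dag\hB_\alpha\sigma_0)=0$ and the vanishing of $\hK_{21}$) that $\supp\sigma_0$ is a common invariant subspace of all $\hL_\alpha$ and $\hK$, which only then contradicts uniqueness via Theorem~\ref{theorem:reducibility} and Brouwer. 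Worse, you say nothing about the coherence blocks $\hP\B\hP_\perp$ and $\hP_\perp\B\hP$: these carry their own eigenvalues of $\L$, which are not eigenvalues of either diagonal restriction, and ruling out peripheral eigenvalues there requires a separate argument (e.g., a Cauchy--Schwarz/positivity bound relating the $12$-block modes to the $11$- and $22$-block decay rates, or the standard lemma that all peripheral eigenvectors of a trace-preserving positive semigroup are supported on the maximal steady-state support). A smaller imprecision: your step (b) should not appeal to Brouwer but to the Hahn--Jordan decomposition --- a traceless Hermitian $\hat X\in\ker\L$ has $e^{\L t}\hat X_\pm=\hat X_\pm$ by saturation of the trace-norm contraction, and $\hat X_+/\Tr\hat X_+$ and $\hat X_-/\Tr\hat X_-$ are then two distinct steady states; the perturbation $\dm_\text{ss}+\epsilon\hat X$ need not be positive when $\dm_\text{ss}$ is rank-deficient.
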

The asymptotic subspaces (a.k.a.\ limit sets) of such relaxing systems are trivial (only comprise $\dm_\text{ss}$) and the Liovillian cannot feature purely imaginary eigenvalues \cite{Schirmer2010-81,Baumgartner2008-41,Albert2016-6,Buca2022-12}.

For brevity, we will usually refer to Davies (ir)reducibility simply as \emph{(ir)reducibility} for the rest of the paper.

\section{A classical criterion for Davies irreducibility}\label{sec:classical}
In this subsection, we will gain a better understanding of (Davies) irreducibility by considering classical Markov processes induced by a quantum channel. In the classical context, there is a well-established notion of irreducibility:
\begin{definition}[Irreducibility of classical Markov processes]
  A classical Markov process with transition matrix $\P$ is called irreducible if every state can be reached from every other state, i.e., if for every pair of states $i,j$, there exists an integer $k$ such that $[\P^k]_{i,j}>0$.
\end{definition}
According to the Perron–Frobenius theorem, every irreducible Markov process for a finite state space has a unique stationary probability distribution (the Perron vector) with all elements being non-zero. We can associate each quantum channel with classical Markov processes \cite{Gorini1976-17}:
\begin{proposition}[From a quantum channel to Markov processes]
  Let $\E$ be a quantum channel and $\{|i\ket\}$ a countable orthonormal (ON) basis of the Hilbert space $\H$. Then
  \begin{align}\label{eq:Markov-P}
  	\P_{i,j}:=  \bra i| \E\big(|j\ket\bra j|\big) |i\ket
  \end{align}
  is the stochastic matrix of a classical Markov process.
\end{proposition}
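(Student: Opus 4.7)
The plan is to verify directly that the matrix $\P$ defined by Eq.~\eqref{eq:Markov-P} satisfies the two defining properties of a (column-)stochastic matrix: non-negative entries and columns summing to one. Together, these make the iteration $\vec{p}_{n+1}=\P\vec{p}_n$ a well-defined classical Markov process on the countable state space labeled by the ON basis indices, with $\P_{i,j}$ interpreted as the probability of transitioning from state $j$ to state $i$.

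First, I would establish non-negativity $\P_{i,j}\geq 0$. Using the Kraus decomposition~\eqref{eq:Kraus} of the channel, $\P_{i,j}=\sum_k \bra i|\hM_k|j\ket \bra j|\hM_k^\dag|i\ket=\sum_k |\bra i|\hM_k|j\ket|^2\geq 0$. Equivalently and more abstractly, one can invoke positivity of $\E$, which follows from complete positivity: since $|j\ket\bra j|\geq 0$, the operator $\E(|j\ket\bra j|)$ is positive semidefinite, so its diagonal matrix elements in any ON basis are non-negative.

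Second, I would verify the normalization $\sum_i \P_{i,j}=1$ using orthonormality and trace preservation: $\sum_i \P_{i,j}=\sum_i \bra i|\E(|j\ket\bra j|)|i\ket=\Tr\E(|j\ket\bra j|)=\Tr|j\ket\bra j|=1$, where the penultimate equality uses $\sum_k \hM_k^\dag\hM_k=\id$ from Eq.~\eqref{eq:Kraus}. Combining the two properties establishes the claim.

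There is no substantial obstacle; the proposition is a direct translation of the defining properties of a quantum channel (complete positivity and trace preservation) into the language of classical probability. The only mild subtlety, relevant in the infinite-dimensional case, is that one should note $\E(|j\ket\bra j|)$ is a positive trace-class operator, so that $\sum_i \bra i|\E(|j\ket\bra j|)|i\ket$ converges absolutely to $\Tr\E(|j\ket\bra j|)$, justifying the interchange of sum and trace used in the second step.
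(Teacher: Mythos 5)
Your proposal is correct and follows essentially the same route as the paper: non-negativity via the Kraus decomposition, $\P_{i,j}=\sum_k |\bra i|\hM_k|j\ket|^2\geq 0$, and column normalization via the completeness relation $\sum_k\hM_k^\dag\hM_k=\id$ (equivalently, trace preservation). The remark on trace-class convergence in the infinite-dimensional case is a small, sensible addition beyond what the paper states explicitly.
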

\begin{proof}
Using an operator-sum representation \eqref{eq:Kraus} of $\E$, we have
\begin{subequations}
\begin{align}
	\P_{i,j}&=\sum_k \bra i|\hM_k|j\ket\bra j|\hM_k^\dag|i\ket \geq 0 \quad \forall i,j\quad\text{and} \\
	\sum_i \P_{i,j}&=\sum_{i,k} \bra j|\hM_k^\dag|i\ket\bra i|\hM_k|j\ket
	  =\sum_k\bra j|\hM_k^\dag\hM_k|j\ket=\bra j|j\ket=1\quad \forall j,
\end{align}
\end{subequations}
i.e., $\P$ is a stochastic matrix.
\end{proof}
According to the following result, irreducibility of open quantum systems can then be characterized by the irreducibility of the associated classical Markov processes: 
\begin{proposition}[From classical to quantum irreducibility]\label{prop:ReduceClassical}
  A quantum channel is irreducible if and only if, for every choice of ON basis $\{|i\ket\}$, the associated classical Markov process with transition matrix \eqref{eq:Markov-P} is irreducible.
\end{proposition}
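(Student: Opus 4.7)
The plan is to exploit Proposition~\ref{prop:InterpretDaviesEvans}(b), which identifies Davies reducibility of $\E$ with the existence of a nontrivial invariant subspace $\V$ satisfying $\{0\} \subsetneq \V \subsetneq \H$. The proposition then reduces to the equivalence: $\E$ admits a nontrivial invariant subspace if and only if some ON basis renders the induced classical Markov chain reducible. I would prove each direction by constructing one object from the other.

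For the ($\Rightarrow$) direction, given a nontrivial invariant subspace $\V$, I would pick any ON basis of $\V$, extend it by an ON basis of $\V^\perp$ to obtain a global basis $\{|i\ket\}$, and let $J$ index those vectors lying in $\V$. Since $\E(|j\ket\bra j|)$ stays supported in $\V$ for every $j \in J$, the transition matrix satisfies $\P_{i,j} = 0$ whenever $i \notin J$ and $j \in J$, so $J$ is a nonempty, proper, closed subset of states and the chain is reducible in this basis.

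For the harder $(\Leftarrow)$ direction, suppose the chain is reducible in some ON basis $\{|i\ket\}$. Classical reducibility furnishes a nonempty, proper, closed subset $J$ of basis indices (e.g., the forward-accessible set from a state $j_0$ that fails to reach some $i_0$). Setting $\V := \Span\{|j\ket : j \in J\}$, $\hP$ the projection onto $\V$, and $\hat{Q} := \id - \hP$, I would establish invariance of $\V$ in two steps. First, positivity of $\E(|j\ket\bra j|)$ combined with the vanishing diagonals $\bra i | \E(|j\ket\bra j|) | i\ket = 0$ for $j \in J$, $i \notin J$ yields $\E(|j\ket\bra j|) |i\ket = 0$, hence $\hat{Q} \E(|j\ket\bra j|) \hat{Q} = 0$. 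Second, insert a Kraus decomposition $\E(\dm) = \sum_k \hM_k \dm \hM_k^\dag$ and use that a sum of positive semidefinite operators vanishes only when each term does, giving $(\hat{Q} \hM_k |j\ket)(\hat{Q} \hM_k |j\ket)^\dag = 0$ and therefore $\hat{Q} \hM_k |j\ket = 0$ for every $k$ and $j \in J$. Linearity then yields $\hM_k \V \subseteq \V$ for all $k$, so $\E$ maps density operators supported in $\V$ into $\B(\V)$, confirming $\V$ as a nontrivial invariant subspace; Proposition~\ref{prop:InterpretDaviesEvans}(b) finishes.

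The main obstacle is that the classical hypothesis supplies only diagonal matrix-element information in a single fixed basis, whereas the conclusion concerns the action of $\E$ on all operators supported in $\V$, including off-diagonal terms $|j\ket\bra j'|$ and general pure states $|\psi\ket\bra\psi|$ with $|\psi\ket \in \V$. The resolution is the twofold use of positivity—first of $\E(|j\ket\bra j|)$ itself to pass from vanishing diagonals to the vanishing of the entire $\V^\perp$ block, and then of the individual Kraus contributions to conclude that every Kraus operator preserves $\V$—which promotes the pointwise diagonal data to the operator-algebraic invariance statement needed for Davies reducibility.
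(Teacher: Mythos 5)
Your proof is correct and follows essentially the same route as the paper's: both reduce the two conditions (Davies reducibility via a projection $\hP$, and the existence of a closed class $\mc{I}$ for the classical chain) to the single Kraus-operator condition $(\id-\hP)\hM_k\hP=0$ for all $k$, using positivity twice to pass from vanishing diagonal matrix elements to block-triangularity of each $\hM_k$. The paper merely states this equivalence in one line, whereas you supply the details; there is no substantive difference in approach.
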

\begin{proof}
The proposition is equivalent to the statement that there exist orthonormal states $\{|i\ket\in\H\,|\,i\in\mc{I}\}$ such that $\hP=\sum_{i\in\mc{I}} | i \ket\bra i|$ reduces the quantum channel if and only if the index set $\mc{I}$ is a closed class for the induced classical Markov process \eqref{eq:Markov-P}. By employing the operator-sum representation \eqref{eq:Kraus} of the quantum channel, both sides of the statement are equivalent to $(\id-\hP)\hM_k\hP=0$ $\forall k$.
\end{proof}

Let us revisit the quantum system in Example~\ref{example1} to demonstrate its reducibility pictorially on the basis of Proposition~\ref{prop:ReduceClassical}. We fix the ON basis $\{|\up\up\ket, |\up\down\ket, |\down\up\ket, |\down\down\ket \}$. Now, note that
\begin{subequations}\label{eq:example1-Ldiag}
\begin{align}
	\L\big(|\up\up\ket\bra\up\up|\big)&=0, \quad
	\L\big(|\up\down\ket\bra\up\down|\big)=|\up\up\ket\bra\up\up|+|\down\down\ket\bra\down\down|-2|\up\down\ket\bra\up\down|,\\
	\L\big(|\down\down\ket\bra\down\down|\big)&=0, \quad
	\L\big(|\down\up\ket\bra\down\up|\big)=|\up\up\ket\bra\up\up|+|\down\down\ket\bra\down\down|-2|\down\up\ket\bra\down\up|,
\end{align}
\end{subequations}
i.e., in the chosen basis, the Liouvillian does not generate any off-diagonal terms. We obtain the transition matrix \eqref{eq:Markov-P} of the associated classical Markov process simply by exponentiation the $4\times 4$ matrix given by Eq.~\eqref{eq:example1-Ldiag}. This yields the following diagrammatic representation of the classical Markov process.
\begin{center}
	\begin{tikzpicture}[->, >=stealth', auto, semithick, node distance=3cm]
	\tikzstyle{every state}=[fill=white,draw=black,thick,text=black,scale=1]
	\node[state]  (A)                   {$|\up\up\ket\bra\up\up|$};
	\node[state]  (B)[right of=A]   {$|\up\down\ket\bra\up\down|$};
	\node[state]  (C)[right of=B]{$|\down\up\ket\bra\down\up|$};
	\node[state]  (D)[right of=C]{$|\down\down\ket\bra\down\down|$};
	\path
	(A) edge[loop left]			node{$1$}	(A)
	(B) edge[loop above]			node{$e^{-2t}$}	(B)
	(C) edge[loop below]			node{$e^{-2t}$}	(C)
	(D) edge[loop right]			node{$1$}	(D)
	(B) edge[bend left,above]	node{$\frac{1-e^{-2t}}{2}$}	(A)
	edge[bend left,above]		node{$\frac{1-e^{-2t}}{2}$}	(D)
	(C) edge[bend left,below]	node{$\frac{1-e^{-2t}}{2}$}	(A)
	edge[bend left,below]		node{$\frac{1-e^{-2t}}{2}$}	(D);
	\end{tikzpicture}
\end{center}
It is obviously reducible as both $|\up\up\ket\bra\up\up|$ and $|\down\down\ket\bra\down\down|$ are absorbing states, showing that the Markovian quantum dynamics $e^{\L t}$ is reducible.
\begin{remark}
  The irreducibility of a quantum channel requires the irreducibility of $\P$ in Eq.~\eqref{eq:Markov-P} for \emph{every} choice of the ON basis. For a reducible quantum channel, there generally still exist ON bases such that the corresponding classical Markov process is irreducible. 
\end{remark}
\begin{example}[A driven two-level system]
Consider a Markovian two-level system with $\hH=0$ and decay induced by the Lindblad operator $\hL=\hsigma^+$. The corresponding classical Markov processes \eqref{eq:Markov-P} for ON bases $\{|\up\ket,|\down\ket\}$ and $\big\{|{\rightarrow}\ket=\frac{1}{\sqrt{2}}(|\up\ket+|\down\ket),\,|{\leftarrow}\ket=\frac{1}{\sqrt{2}}(|\up\ket-|\down\ket)\big\}$ are
\begin{center}
\begin{tikzpicture}[->, >=stealth', auto, semithick, node distance=3cm]
	\tikzstyle{every state}=[fill=white,draw=black,thick,text=black,scale=1]
	\node[state]  (A)                   {$|\up\ket\bra\up|$};
	\node[state]  (B)[right of=A]   {$|\down\ket\bra\down|$};
	\path
	(A) edge[loop above]			node{$1$}	(A)
	(B) edge[loop above]			node{$e^{-t}$}	(B)
	(B) edge[bend left,above]	node{$1-e^{-t}$}	(A) ;
	\end{tikzpicture} \qquad and \qquad
\begin{tikzpicture}[->, >=stealth', auto, semithick, node distance=3cm]
	\tikzstyle{every state}=[fill=white,draw=black,thick,text=black,scale=1]
	\node[state]  (A)                   {$|{\rightarrow}\ket\bra{\rightarrow}|$};
	\node[state]  (B)[right of=A]   {$|{\leftarrow}\ket\bra{\leftarrow}|$};
	\path
	(A) edge[loop above]			node{$\frac{1+e^{-t/2}}{2}$}	(A)
	(B) edge[loop above]			node{$\frac{1+e^{-t/2}}{2}$}	(B)
	(A) edge[bend left,above]	node{$\frac{1-e^{-t/2}}{2}$}	(B) 
	(B) edge[bend left,above]	node{$\frac{1-e^{-t/2}}{2}$}	(A) ;
	\end{tikzpicture},
\end{center}
respectively. The first Markov process is reducible, while the second is irreducible. So, the quantum system is reducible.
\end{example}

\begin{remark}
  The irreducibility of a Markovian quantum system \eqref{eq:QME} with $\dim\H<\infty$ does not imply the matrix irreducibility of its quantum channel $\E=e^{\L t}$ or the Liouvillian $\L$ in any basis. Conversely, the Liouvillian of a Davies-reducible system is matrix-reducible. Recall that a matrix is called reducible if it is similar to a block-upper-triangular matrix by a permutation of rows and columns.
\end{remark}
\begin{example}[A two-level system with loss and gain]\label{example:loss_gain}
Consider a Markovian two-level system with $\hH=0$ and the two Lindblad operators $\hL_+=\hsigma^+$ and $\hL_-=\hsigma^-$.
For reasons explained below in Sec.~\ref{sec:algebraic}, this system is irreducible. In the ON basis $B:=\{|\up\ket\bra\down|, |\down\ket\bra\up|, |\up\ket\bra\up|, |\down\ket\bra\down|\}$ for the operator space $\A(\H)$, the Liouvillian and the associated quantum channel read
\begin{align}
	[\L]_B=
	\Bmatrix{-1&0&0&0 \\ 0&-1&0&0 \\ 0&0&-1&1 \\ 0&0&1&-1}\quad\text{and} \quad
	[e^{\L t}]_B=
	\Bmatrix{ e^{-t}&0&0&0 \\ 0&e^{-t}&0&0 \\ 0&0&\frac{1+e^{-2t}}{2}&\frac{1-e^{-2t}}{2} \\
                  0&0&\frac{1-e^{-2t}}{2}&\frac{1+e^{-2t}}{2} }.
\end{align}
Both are block upper triangular matrices, i.e., matrix reducible while the quantum system is (Davies) irreducible.
\end{example}

\begin{remark}
The conservation of a projection operator, $\L^\dag(\hP)=0$, in Evans reducibility represents a stronger restriction than Davies reducibility; c.f.\ Proposition~\ref{prop:InterpretDaviesEvans}. In fact, Evans reducibility implies that, in a suitable basis, the corresponding classical Markov process can be decoupled into two uncoupled processes as illustrated in the example below. In terms of graph theory, Evans irreducibility implies weak connectivity while Davies irreducibility requires strong connectivity for all ON bases.
\begin{example}[Dephasing for a two-level system] Consider a Markovian two-level system with $\hH=0$ and dephasing induced by the Lindblad operator $\hL=\hsigma^z$. For the ON basis $\{|\up\ket,|\down\ket\}$, the associated Markov process \eqref{eq:Markov-P} has the graphical representation
\begin{center}
\begin{tikzpicture}[->, >=stealth', auto, semithick, node distance=3cm]
	\tikzstyle{every state}=[fill=white,draw=black,thick,text=black,scale=1]
	\node[state]  (A)                   {$|\up\ket\bra\up|$};
	\node[state]  (B)[right of=A]   {$|\down\ket\bra\down|$};
	\path
	(A) edge[loop left]			node{$1$}	(A)
	(B) edge[loop right]			node{$1$}	(B);
	\end{tikzpicture}
\end{center}
which consists of two disconnected parts.
\end{example}
\end{remark}

\section{Algebraic criteria for Davies irreducibility}\label{sec:algebraic}
For Markovian quantum dynamics (a.k.a.\ quantum dynamical semigroups) generated by a Liouvillian $\L$, the (Davies) irreducibility of the channel $\E=e^{\L t}$, as captured by Definition~\ref{def:reducibility} is equivalent to the absence of non-trivial invariant subspaces for $\L$. In particular, a Markovian quantum system \eqref{eq:QME} is irreducible if and only if the only projection operators $\hP$ satisfying
\begin{align}\label{eq:Lreducible}
	\L(\hP\dm\hP)=\hP\L(\hP\dm\hP)\hP \quad  \forall \dm
\end{align}
are $0$ and $\id$. This leads us to the following algebraic reducibility criterion for the Hamiltonian and Lindblad operators.
\begin{theorem}[Davies reducibility implies invariant subspaces]\label{theorem:reducibility}
  The projection $\hP\neq 0,\id$ reduces the Markovian quantum dynamics generated by the Liouvillian \eqref{eq:QME} if and only if
  \begin{equation}\label{eq:algebraicCrit}
  	(\id-\hP)\hK\hP=0\quad\text{and}\quad
  	(\id-\hP)\hL_\alpha\hP=0 \ \forall \alpha,\quad\text{where}\quad
  	\hK:=\mri\hH+\frac{1}{2}\sum_\alpha \hL^\dag_\alpha\hL_\alpha.
  \end{equation}
\end{theorem}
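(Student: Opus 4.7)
The plan is to recast the Lindbladian in the compact form $\L(\dm)=\hK\dm+\dm\hK^\dag+\sum_\alpha\hL_\alpha\dm\hL_\alpha^\dag$, which follows directly from the definition of $\hK$ in Eq.~\eqref{eq:algebraicCrit}, and then decompose the reducibility condition \eqref{eq:Lreducible} in the $2\times 2$ block structure induced by $\hP$ and $\hQ:=\id-\hP$. Saying that $\L(\hP\dm\hP)$ equals $\hP\L(\hP\dm\hP)\hP$ is equivalent to demanding that its three complementary blocks vanish:
\begin{align*}
\hQ\L(\hP\dm\hP)\hQ=0,\quad \hQ\L(\hP\dm\hP)\hP=0,\quad \hP\L(\hP\dm\hP)\hQ=0\quad\forall\dm.
\end{align*}
The third is the adjoint of the second (since $\dm$ is arbitrary Hermitian), so only the first two need to be analyzed.

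Using $\hQ\hP=0$ and the compact form of $\L$, a short calculation gives
\begin{align*}
\hQ\L(\hP\dm\hP)\hQ &= \sum_\alpha (\hQ\hL_\alpha\hP)\,\dm\,(\hQ\hL_\alpha\hP)^\dag,\\
\hQ\L(\hP\dm\hP)\hP &= \hQ\hK\hP\dm\hP + \sum_\alpha (\hQ\hL_\alpha\hP)\,\dm\,(\hP\hL_\alpha^\dag\hP).
\end{align*}
The first line is a sum of positive semidefinite operators for every pure $\dm=|\psi\ket\bra\psi|$ with $\psi$ in the image of $\hP$, so each summand has to vanish individually. This forces $\hQ\hL_\alpha\hP|\psi\ket=0$ for every such $\psi$ and every $\alpha$, which is exactly $(\id-\hP)\hL_\alpha\hP=0$. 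Substituting this back into the second line collapses the sum, leaving $\hQ\hK\hP\,\hP\dm\hP=0$ for all $\dm$; varying $\dm$ over states supported in $\text{image}(\hP)$ gives $(\id-\hP)\hK\hP=0$. This proves the ``only if'' direction.

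For the converse, assume the two algebraic conditions in Eq.~\eqref{eq:algebraicCrit}. Then every summand $\hL_\alpha\hP\dm\hP\hL_\alpha^\dag$ satisfies $\hL_\alpha\hP=\hP\hL_\alpha\hP$, so it is sandwiched by $\hP$ on both sides. Similarly, $\hK\hP=\hP\hK\hP$ implies $\hK\hP\dm\hP=\hP\hK\hP\dm\hP$ and, by Hermitian conjugation, $\hP\dm\hP\hK^\dag=\hP\dm\hP\hK^\dag\hP$. Adding the three contributions that make up $\L(\hP\dm\hP)$ shows that the result is already of the form $\hP(\cdot)\hP$, establishing Eq.~\eqref{eq:Lreducible}. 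The main subtlety is the ``only if'' step, and its key idea is the standard trick of extracting individual operator identities from a vanishing sum of positive semidefinite terms by evaluating on arbitrary pure states in $\text{image}(\hP)$; everything else amounts to careful bookkeeping of the block decomposition.
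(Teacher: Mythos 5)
Your proposal is correct and follows essentially the same route as the paper: both rewrite $\L(\hP\dm\hP)$ using $\hK$, extract $(\id-\hP)\hL_\alpha\hP=0$ from a vanishing sum of positive semidefinite operators, and then deduce $(\id-\hP)\hK\hP=0$ from what remains, with the converse being the same direct substitution. The only cosmetic difference is that the paper tests the single state $\dm=\hP$ and sandwiches with $\hP_\perp$, whereas you decompose into the full $2\times 2$ block structure and test arbitrary pure states supported in the image of $\hP$; the two are interchangeable.
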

\begin{proof}
Assume that $\hP$ reduces the Liouvillian. With the Lindblad form \eqref{eq:QME} of $\L$, Eq.~\eqref{eq:Lreducible} reads
\begin{align}\label{eq:Lreducible2}
	-\hK\hP\dm\hP-\hP\dm\hP\hK^\dag+\sum_{\alpha}\hL_\alpha\hP\dm\hP\hL_\alpha^\dag
	=-\hP\hK\hP\dm\hP-\hP\dm\hP\hK^\dag\hP+\sum_{\alpha}\hP\hL_\alpha\hP\dm\hP\hL_\alpha^\dag \hP.
\end{align}
Now, the `if' part of the theorem follows immediately by employing the criteria \eqref{eq:algebraicCrit} in the form $\hK\hP=\hP\hK\hP$ and $\hL_\alpha\hP=\hP\hL_\alpha\hP$ to show that both sides of Eq.~\eqref{eq:Lreducible2} agree. 
To prove the `only if' part of the theorem, let $\hP_\perp:=\id-\hP$. Setting $\dm=\hP$ in Eq.~\eqref{eq:Lreducible2}, we obtain
\begin{align}\label{eq:Lreducible3}
	-\hP_\perp\hK\hP-\hP\hK^\dag\hP_\perp+\sum_{\alpha}\big(\hL_\alpha\hP\hL_\alpha^\dag-\hP\hL_\alpha\hP\hL_\alpha^\dag\hP\big)=0.
\end{align}
Multiplying this equation from the left and right with $\hP_\perp$ gives $\sum_{\alpha}\hP_\perp\hL_\alpha\hP\hL_\alpha^\dag\hP_\perp=0$. As the left-hand side is a sum of positive semidefinite operators, it follows that $\hP_\perp\hL_\alpha\hP=0$ $\forall \alpha$. With this constraint, Eq.~\eqref{eq:Lreducible3} simplifies to $\hP_\perp\hK\hP+\hP\hK^\dag\hP_\perp=0$, which implies $\hP_\perp\hK\hP=0$ such that we have arrived at Eq.~\eqref{eq:algebraicCrit}.
\end{proof}
See also the closely related theorem~III.1 in Ref.~\cite{Fagnola2002-43} and corollary~1 in Ref.~\cite{Ticozzi2008-53}.
The condition \eqref{eq:algebraicCrit} means that all $\hL_\alpha$ and $\hK$ share a common invariant subspace. We then get an algebraic criterion for a Markovian system to be irreducible. See also Ref.~\cite{Yoshida2023_09}.
\begin{theorem}[Main result]\label{theorem:LKalgebra}
  A Markovian quantum system \eqref{eq:QME} is irreducible if the associative $\CC$-algebra generated by the Lindblad operators $\{\hL_\alpha\}$ and $\hK=\mri\hH+\frac{1}{2}\sum_\alpha \hL^\dag_\alpha\hL_\alpha$ is the whole operator space $\A(\H)$. For $\dim\H<\infty$, the converse holds as well, i.e., the ``if'' becomes an ``if and only if''.
\end{theorem}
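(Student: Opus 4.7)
}
The strategy is to connect the algebraic condition directly to the invariant-subspace condition supplied by Theorem~\ref{theorem:reducibility}. Denote by $\W$ the multiplicative algebra generated by $\{\hL_\alpha\}\cup\{\hK\}$ under linear combination and multiplication. The key observation is that the two conditions $(\id-\hP)\hL_\alpha\hP=0$ and $(\id-\hP)\hK\hP=0$ in Eq.~\eqref{eq:algebraicCrit} together say exactly that the subspace $\V=\hP\H$ is a common invariant subspace of every generator of $\W$. Since invariance under a set of operators passes to all products and linear combinations of those operators, $\V$ is invariant under every element of $\W$. Conversely, any common invariant subspace of $\W$ is in particular invariant under $\hL_\alpha$ and $\hK$. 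Hence, nontrivial reducing projections of $\L$ are in one-to-one correspondence with nontrivial common invariant subspaces of $\W$.

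For the \emph{if} direction (valid in any dimension), suppose $\W=\B(\H)$. If $\hP\neq 0,\id$ were a reducing projection, the argument above would provide a nontrivial subspace $\V=\hP\H$ invariant under all of $\B(\H)$. But for any $|\psi\ket\in\V$ and any $|\phi\ket\in\H$, the rank-one operator $|\phi\ket\bra\psi|$ lies in $\B(\H)$ and maps $|\psi\ket$ to a scalar multiple of $|\phi\ket$, so $\V=\H$, contradicting $\hP\neq\id$ (unless $\V=\{0\}$). Therefore no nontrivial reducing $\hP$ can exist, and by Theorem~\ref{theorem:reducibility} the system is irreducible.

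For the \emph{only if} direction in finite dimension, I would invoke Burnside's theorem: every subalgebra of $\B(\H)\cong M_n(\mathbb{C})$ with no nontrivial common invariant subspace is all of $\B(\H)$. Contrapositively, if $\W\subsetneq\B(\H)$, then $\W$ admits a nontrivial common invariant subspace $\V$; letting $\hP$ project onto $\V$, the generators $\hL_\alpha$ and $\hK$ satisfy Eq.~\eqref{eq:algebraicCrit}, and Theorem~\ref{theorem:reducibility} produces a nontrivial reducing projection, so the system is reducible. A mild subtlety is that standard statements of Burnside concern unital subalgebras; however, $\W\cup\{\id\}$ still lies in a proper subalgebra whenever $\W\subsetneq\B(\H)$ and has the same invariant subspaces as $\W$, so the application is clean.

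The main obstacle is really the converse direction: without Burnside one would have to build an invariant subspace by hand from the failure of $\W$ to exhaust $\B(\H)$, which is delicate because $\W$ is not assumed to be $\ast$-closed (cf.\ footnote~\ref{foot:W}), so one cannot simply take orthogonal complements or appeal to the double-commutant theorem. Burnside bypasses this difficulty entirely in finite dimension. I would also briefly note why the converse is not asserted for $\dim\H=\infty$: there exist proper norm-closed subalgebras of $\B(\H)$ (e.g.\ containing a unilateral shift) that act topologically irreducibly, so that the equivalence between "no common invariant subspace" and "equals $\B(\H)$" breaks down and an extra topological hypothesis would be required.
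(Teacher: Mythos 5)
Your proposal is correct and follows essentially the same route as the paper: it reduces the statement to the common-invariant-subspace characterization of Theorem~\ref{theorem:reducibility}, notes that invariance passes from the generators $\{\hL_\alpha,\hK\}$ to the whole multiplicative algebra $\W$ (with $\B(\H)$ admitting no nontrivial invariant subspace), and invokes Burnside's theorem for the finite-dimensional converse. Your closing remarks on the non-$\ast$-closedness of $\W$ and the failure of the converse in infinite dimension likewise match the paper's discussion, so nothing further is needed.
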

\begin{proof}
For $\dim\H<\infty$, this is a direct consequence of Theorem~\ref{theorem:reducibility} and Burnside's theorem \cite{Burnside1905-s2,Lam1998-105,Lomonosov2004-383, Jacobson2013}. The situation is more subtle for $\dim\H=\infty$: The invariant subspaces of the $\CC$-algebra generated by $\{\hL_\alpha,\hK\}$ \cite{Note2} agree with the common invariant subspaces of $\{\hL_\alpha,\hK\}$ since linear combination and multiplication preserve subspace invariance. So, the ``if'' part follows from the fact that the full algebra $\A(\H)$ has no nontrivial subspaces. The converse direction does however not hold without further assumptions \cite{Lomonosov1991-75}. A counter-example is given by 
the algebra of all finite-rank operators on an infinite-dimensional vector space. It is distinct from $\A(\H)$ but has no nontrivial invariant subspace \cite{Shapiro2014}: For every nontrivial subspace $\V\subset\H$ containing $|\psi\ket$ and having $|\phi\ket$ in its orthogonal complement, $\V$ is not invariant under action of the rank-1 operator $|\phi\ket\bra\psi|$.
\end{proof}

The generation of the algebra in this statement involves linear combination and multiplication, but \emph{not} Hermitian conjugation \cite{Note2}.
In Sec.~\ref{sec:motivation}, we already mentioned an inaccurate interpretation of Evans theorem in the recent physics literature, based on the algebra generated by $\{\hL_\alpha,\hL_\alpha^\dag,\hH\}$ and used to assess the uniqueness of steady states. Another inaccurate approach, closer to Theorem~\ref{theorem:LKalgebra}, probes whether $\{\hL_\alpha,\hH\}$ generate the full algebra $\A(\H)$ \cite{Prosen2012-86,Prosen2012-86b,Buca2012-14,Ilievski2014-1,Oliveira2020-53,Thingna2021-31}. However, the operators $\{\hL_\alpha,\hH\}$ are generally not accessible from $\{\hL_\alpha,\hK\}$ through multiplication and linear combination, and we will give corresponding counter-examples in Sec.~\ref{sec:examples-qubit}. A case that works are systems where $\Span\{\hL_\alpha\}$ is a \emph{self-adjoint set} ($\hat{X}$ being in the set implies that $\hat{X}^\dag$ is also contained), as we can then subtract $\frac{1}{2}\sum_\alpha \hL^\dag_\alpha\hL_\alpha$ from $\hK$ to obtain $\hH$ according to Eq.~\eqref{eq:algebraicCrit}.

Employing Schur's lemma, which implies that the only operator commuting with the entire algebra $\A(\H)$ is the identity (times any scalar), we get the following corollary.
\begin{corollary}\label{coro:ncondition}
  A Markovian quantum system \eqref{eq:QME} can only be irreducible if $\{\hL_\alpha, \hK\}'=\{z\id\,|\, z\in\CC\}$.
\end{corollary}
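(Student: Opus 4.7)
The plan is to deduce the corollary as a short two-step consequence of Theorem~\ref{theorem:LKalgebra} and Schur's lemma, with the bridge between them being the standard observation that passing from a generating set to the algebra it generates does not change the commutant. Denote by $\W$ the multiplicative algebra generated by $\{\hL_\alpha, \hK\}$ through linear combinations and products.

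First I would establish that $\{\hL_\alpha, \hK\}' = \W'$. The inclusion $\W' \subseteq \{\hL_\alpha, \hK\}'$ is immediate because the generators lie in $\W$. For the reverse inclusion, any operator $\hat{Y}$ commuting with each generator automatically commutes with every product and every linear combination of generators, hence with every element of $\W$. Next, assume the Markovian system is irreducible. By Theorem~\ref{theorem:LKalgebra}, this forces $\W = \B(\H)$, using the ``if and only if'' direction of the main theorem. Finally, Schur's lemma, as already stated in the paragraph preceding the corollary, gives $\B(\H)' = \{z\id \mid z \in \mathbb{C}\}$. Concatenating the three equalities, $\{\hL_\alpha, \hK\}' = \W' = \B(\H)' = \{z\id \mid z \in \mathbb{C}\}$, which is exactly the assertion of the corollary.

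No serious obstacle arises in finite dimensions; the argument is essentially a one-line deduction once the main result is in hand. The only subtle point is that the direction of Theorem~\ref{theorem:LKalgebra} invoked here, namely irreducibility $\Rightarrow \W = \B(\H)$, is the converse that is only established for $\dim \H < \infty$. For $\dim \H = \infty$ one would instead need to argue directly from Theorem~\ref{theorem:reducibility}: given a non-scalar $\hat{X} \in \{\hL_\alpha, \hK\}'$, one would exhibit a common proper invariant subspace of $\{\hL_\alpha, \hK\}$ and read off a reducing projection. In finite dimensions this is trivial via any eigenspace of $\hat{X}$, which is automatically invariant under every operator commuting with $\hat{X}$ and is proper as soon as $\hat{X}$ is not a multiple of the identity; the infinite-dimensional analogue is the only place where further functional-analytic input could be required.
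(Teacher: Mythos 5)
Your proof is correct and follows essentially the same route as the paper: irreducibility forces the multiplicative algebra generated by $\{\hL_\alpha,\hK\}$ to be all of $\B(\H)$ (the finite-dimensional converse direction of Theorem~\ref{theorem:LKalgebra}), and Schur's lemma together with the standard fact that a generating set and the algebra it generates have the same commutant yields the claim. Your closing remark correctly flags that this direction of the main theorem is only established for $\dim\H<\infty$, and the eigenspace argument you sketch is the natural direct alternative in that setting.
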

\noindent
Note that this corollary only provides a necessary condition for irreducibility; see, e.g., Examples~\ref{example:Lind0101Lind0102}-\ref{example:LindspHsx} below. However, in the special case where $\Span\{\hL_\alpha,\hK\}$ is a self-adjoint set, the algebra generated by the Lindblad operators $\hL_\alpha$ and $\hK$ is then also a self-adjoint set. Such self-adjoint $\CC$-algebras ($C^*$-algebras) have no non-trivial invariant subspace if and only if the commutant of the algebra is just $\{z\id\}$ \cite{Arveson2001}. Generalizing this idea a bit further leads us to the following sufficient condition for irreducibility.
\begin{corollary}[Extension of Frigerio's second theorem]\label{coro:FrigerioImproved}
  Consider any subset $\mc{G}$ of the $\CC$-algebra generated by $\{\hL_\alpha,\hK\}$. If $\Span\mc{G}$ is a self-adjoint set, the Markovian quantum system \eqref{eq:QME} is irreducible if the commutant of $\mc{G}$ is trivial, i.e., if $\mc{G}'=\{z\id\,|\, z\in\CC\}$.
\end{corollary}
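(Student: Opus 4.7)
The plan is to prove the contrapositive: assume the system is reducible, and derive that $\mc{G}'\neq\{z\id\,|\,z\in\mathbb{C}\}$. So I would start with a nontrivial projection $\hP\neq 0,\id$ that reduces the dynamics and work toward exhibiting a nonscalar element of $\mc{G}'$ --- namely $\hP$ itself.

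By Theorem~\ref{theorem:reducibility}, such a $\hP$ satisfies $(\id-\hP)\hK\hP=0$ and $(\id-\hP)\hL_\alpha\hP=0$ for every $\alpha$, i.e., the range $V:=\hP\H$ is a common invariant subspace of $\hK$ and $\{\hL_\alpha\}$. Since subspace invariance is preserved under linear combinations and products, $V$ is invariant under the entire multiplicative algebra generated by $\{\hL_\alpha,\hK\}$ and, in particular, under every element of the subset $\mc{G}$. Translated back into operator form, this reads $(\id-\hP)\hat X\hP=0$ for all $\hat X\in\mc{G}$.

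At this point the self-adjointness of $\Span\mc{G}$ enters. For any $\hat X\in\mc{G}$, the adjoint $\hat X^\dag$ lies in $\Span\mc{G}$, and by linearity of the invariance relation from the previous paragraph one also gets $(\id-\hP)\hat X^\dag\hP=0$. Taking the Hermitian conjugate of this equation yields $\hP\hat X(\id-\hP)=0$. Combined with $(\id-\hP)\hat X\hP=0$, this forces $\hP\hat X=\hP\hat X\hP=\hat X\hP$, so $[\hP,\hat X]=0$. Hence $\hP\in\mc{G}'$, contradicting the hypothesis $\mc{G}'=\{z\id\,|\,z\in\mathbb{C}\}$, since $\hP$ is a nontrivial projection and thus not a scalar multiple of $\id$.

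The argument is essentially Theorem~\ref{theorem:reducibility} plus the standard fact that an invariant subspace for a self-adjoint set of operators yields an orthogonal block decomposition, in which case the associated projection actually commutes with the operators. The only slightly delicate point is the promotion from ``$\mc{G}$ leaves $V$ invariant'' to ``$\Span\mc{G}$ leaves $V$ invariant'', which is immediate by linearity but is needed precisely in order to apply the self-adjointness assumption on $\Span\mc{G}$ rather than on $\mc{G}$ itself. I do not anticipate a real obstacle: the proof strategy parallels the usual argument for Frigerio's original theorem, with $\hK$ playing the role that $\hH$ would play when $\Span\{\hL_\alpha\}$ is already self-adjoint.
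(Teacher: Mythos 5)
Your proof is correct and follows essentially the same route the paper sketches: reducibility yields a common invariant subspace for $\{\hL_\alpha,\hK\}$ via Theorem~\ref{theorem:reducibility}, hence for the generated algebra and for $\mc{G}$, and the self-adjointness of $\Span\mc{G}$ upgrades invariance to $\hP\in\mc{G}'$, contradicting the trivial commutant. The only difference is that you prove directly the fact the paper cites from the literature (a self-adjoint set of operators has a nontrivial invariant subspace if and only if its commutant is nontrivial), which makes your argument self-contained.
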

\noindent
A corresponding result by Frigerio is a special case of this corollary (see theorem~3.2 in Ref.~\cite{Frigerio1978-63} and theorem~5.3 in Ref.~\cite{Spohn1980-52}):
\begin{corollary}[Frigerio's second theorem]\label{coro:Frigerio2}
  If $\Span\{\hL_\alpha\}$ is a self-adjoint set and $\{\hL_\alpha\}'=\{z\id\,|\, z\in\CC\}$, then the Markovian quantum system \eqref{eq:QME} is irreducible.
\end{corollary}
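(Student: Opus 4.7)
The plan is to recognize this as an immediate specialization of the preceding Extension of Frigerio's second theorem (Corollary~\ref{coro:FrigerioImproved}), obtained by choosing the subset $\mc{G} := \{\hL_\alpha\}$ itself. This is the shortest route and fully exploits work already done.

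First, I would verify that this choice of $\mc{G}$ satisfies all three preconditions of Corollary~\ref{coro:FrigerioImproved}: (i) $\mc{G}$ is trivially a subset of the multiplicative algebra generated by $\{\hL_\alpha,\hK\}$, since the $\hL_\alpha$ are among its generators; (ii) $\Span\mc{G}=\Span\{\hL_\alpha\}$ is a self-adjoint set by hypothesis; and (iii) $\mc{G}'=\{\hL_\alpha\}'=\{z\id\,|\,z\in\mathbb{C}\}$ by hypothesis. In checking (iii) I would implicitly use the standard observation that the commutant of a set coincides with the commutant of its linear span, which is what makes the commutant hypothesis interact cleanly with the self-adjointness hypothesis on $\Span\{\hL_\alpha\}$. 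Once the three preconditions are in place, a direct invocation of Corollary~\ref{coro:FrigerioImproved} yields the claimed irreducibility.

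If one wished a more self-contained route bypassing Corollary~\ref{coro:FrigerioImproved}, the alternative plan would be to apply the main result (Theorem~\ref{theorem:LKalgebra}) after showing that the multiplicative algebra generated by $\{\hL_\alpha,\hK\}$ is all of $\B(\H)$. The key step is to observe that, because $\Span\{\hL_\alpha\}$ is self-adjoint, the multiplicative algebra $\mc{A}$ generated by $\{\hL_\alpha\}$ is itself closed under Hermitian conjugation, hence a $*$-algebra. Triviality of its commutant, combined with Schur's lemma (finite-dimensional case) or von Neumann's bicommutant theorem (general case), then forces $\mc{A}=\B(\H)$. Adjoining $\hK$ to the generators can only enlarge the algebra, so the algebra generated by $\{\hL_\alpha,\hK\}$ also equals $\B(\H)$, and Theorem~\ref{theorem:LKalgebra} concludes the argument.

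I do not expect a genuine conceptual obstacle here, since the nontrivial content was already absorbed into Corollary~\ref{coro:FrigerioImproved} (and ultimately into Burnside's theorem or the bicommutant theorem). The only delicate point worth stating explicitly is that self-adjointness is essential: without it, a trivial commutant of $\{\hL_\alpha\}$ does not suffice to force the generated algebra to fill $\B(\H)$, as illustrated by Example~\ref{example:loss_gain} and the counter-examples alluded to after Theorem~\ref{theorem:LKalgebra}.
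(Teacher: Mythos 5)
Your proposal is correct and matches the paper's own treatment: the paper derives Frigerio's second theorem precisely as the special case $\mc{G}=\{\hL_\alpha\}$ of Corollary~\ref{coro:FrigerioImproved}, exactly as you do, and your verification of the three preconditions is sound. The alternative route you sketch via Theorem~\ref{theorem:LKalgebra} and the bicommutant argument is also valid but is essentially the content already packaged inside Corollary~\ref{coro:FrigerioImproved}.
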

\noindent
For a two-level system with loss and gain, the set of Lindblad operators $\{\hL_+=\sqrt{\gamma_+}\,\hsigma^+,\hL_-=\sqrt{\gamma_-}\,\hsigma^-\}$ meets the condition of Frigerio's second theorem as long as the rates $\gamma_\pm$ are both nonzero. Such systems, as already encountered in Example~\ref{example:loss_gain}, are then irreducible. Let us discuss further examples to illustrate these general results.

\section{Examples for Davies reducible and irreducible two-level systems}\label{sec:examples-qubit}
The following examples concern two-level systems (a qubit or spin 1/2). Throughout this section, we use the orthonormal basis $\{|\up\ket,|\down\ket\}$ to express operators, and $\{\hsigma^x,\hsigma^y,\hsigma^z\}$ denote the Pauli operators.
\begin{remark}
We will see how a change of Hamiltonian terms (Examples~\ref{example:Lind1101Hsy} and \ref{example:LindspHsx}) as well as the addition or removal of dissipators (Examples~\ref{example:Lind1101} and \ref{example:Lind1101Lind1m101}) can render a reducible system irreducible and vice versa.
Examples~\ref{example:Lind1101} and \ref{example:Lind1101Hsy} also demonstrate that a system is \emph{not} necessarily irreducible if and only if $\{\hL_\alpha,\hH\}$ generate the full operator algebra $\A(\H)$, which has been stated and used as a criterion in the recent literature (for example, in Refs.~\cite{Prosen2012-86,Prosen2012-86b,Buca2012-14,Ilievski2014-1,Oliveira2020-53,Thingna2021-31}).
In addition to Example~\ref{example1}, Examples~\ref{example:Lind0101Lind0102}-\ref{example:LindspHsx} demonstrate that a system is \emph{not} necessarily irreducible if and only if the set of operators $\{\hL_\alpha,\hL_\alpha^\dag,\hH\}$ generates the full algebra $\A(\H)$, which has been stated and used as a criterion, for example, in Refs.~\cite{Popkov2012-12,Popkov2013-15,Prosen2015-48,Manzano2018-67,Nigro2019-4}.
\end{remark}

\begin{example}\label{example:Lind1101}
  Consider the Markovian two-level system with
  \begin{equation}
  	\hH=0\ \ \text{and}\ \ \hL=\Bmatrix{1&1\\0&1}\ \ 
  	\Rightarrow\ \
  	2\hK\stackrel{\eqref{eq:algebraicCrit}}{=}\hL^\dag\hL=\Bmatrix{1&1\\1&2}.
  \end{equation}
  Clearly, the one-dimensional space spanned by $|\up\ket$ is an invariant subspace of $\hL$ and $\Span\{\hL\}$ is not self-adjoint such that the conditions for Frigerio's 2nd theorem (Corollary~\ref{coro:Frigerio2}) do not hold. However, $\hL$ and $\hK$ have no mutual invariant subspace and the $\CC$-algebra they generate \cite{Note2} is the whole operator space $\A(\H)$: For example, $\hL^2-\hL=\hsigma^+$ and $4\hK^2-2(\hK\hL+\hK)=\hsigma^-$, from which we easily obtain the operator basis $\{\id,\hsigma^x, \hsigma^y, \hsigma^z\}$. Thus, the system is irreducible according to Theorem~\ref{theorem:LKalgebra}. In accordance with Theorem~\ref{theorem:faithful}, the steady state
  \begin{equation}
  	\dm=\frac{1}{3}\Bmatrix{2&-1\\-1&1}
  \end{equation}
  is faithful, i.e., it has full rank.
  Also note that $\{\hL,\hH\}$ do \emph{not} generate $\A(\H)$.
\end{example}

\begin{example}\label{example:Lind0101Lind0102} 
  Consider the Markovian two-level system with
  \begin{equation}
  	\hH=0,\ \ \hL_1=\Bmatrix{0&1\\0&1},\ \ and\ \ \hL_2=\Bmatrix{0&1\\0&2}\ \
  	\Rightarrow\ \
  	2\hK\stackrel{\eqref{eq:algebraicCrit}}{=}\hL_1^\dag\hL_1+\hL_2^\dag\hL_2=\Bmatrix{0&0\\0&7}.
  \end{equation}
  Frigerio's 2nd theorem (Corollary~\ref{coro:Frigerio2}) does not apply since $\Span\{\hL_1,\hL_2\}$ is not self-adjoint.
  It can be checked directly that $\{\hL_1,\hL_2\}'=\{\hL_1,\hL_2,\hK\}'=\{z\id\}$ such that the condition in Corollary~\ref{coro:ncondition} is obeyed. However, $\dm=|\up\ket\bra\up|$ is the unique steady state. Since this steady state does not have full rank, the system is reducible (Theorem~\ref{theorem:faithful}). This is because Corollary~\ref{coro:ncondition} only gives a necessary condition, and, in this case, $\{\hL_1,\hL_2,\hK\}$ do not generate the full algebra $\A(\H)$. $\hP=|\up\ket\bra\up|$ reduces the quantum dynamics.
  Also note that $\{\hL_\alpha,\hL_\alpha^\dag,\hH\}$ \emph{do} generate $\A(\H)$.
\end{example}

\begin{example}\label{example:Lind1101Hsy}
  Consider the Markovian two-level system with
  \begin{equation}\label{eq:Lind1101Hsy}
  	\hH=\frac{1}{2}\hsigma^y=\frac{1}{2}\Bmatrix{0&-\mri\\\mri&0}\ \ \text{and} \ \
  	\hL=\Bmatrix{1&1\\0&1}
  	\Rightarrow\ \
  	\hK\stackrel{\eqref{eq:algebraicCrit}}{=}\mri\hH+\frac{1}{2}\hL^\dag\hL=\Bmatrix{1/2&1\\0&1}.
  \end{equation}
  Frigerio's 2nd theorem (Corollary~\ref{coro:Frigerio2}) does not apply.
  With the trivial commutant $\{\hL,\hK\}'=\{z\id\}$, the condition in Corollary~\ref{coro:ncondition} is obeyed.
  However, unlike Example~\ref{example:Lind1101}, this system is reducible according to Theorem~\ref{theorem:LKalgebra}. The inclusion of the Hamiltonian has changed $\hK$ such that $\{\hL,\hK\}$ have the common upper-triangular form \eqref{eq:Lind1101Hsy} and do not generate the full algebra $\A(\H)$ anymore. In fact, the unique steady state $\dm=|\up\ket\bra\up|$ is pure, which also implies reducibility according to Theorem~\ref{theorem:faithful}.
  Also note that $\{\hL,\hH\}$ (and hence also $\{\hL,\hL^\dag,\hH\}$) \emph{do} generate $\A(\H)$ as $[\hL,2\hH]=\mri \hsigma^z$ and $[[\hL,2\hH],2\hH]=[\mri\hsigma^z,\hsigma^y]=2\sigma^x$.
\end{example}

\begin{example}\label{example:Lind1101Lind1m101}
  Now, consider the Markovian two-level system with
  \begin{equation}
  	\hH=0,\ \ \hL_1=\Bmatrix{1&1\\0&1}, \ \ \text{and}\ \ \hL_2=\Bmatrix{1&-1\\0&1}
  	\Rightarrow\ \
  	\hK\stackrel{\eqref{eq:algebraicCrit}}{=}\frac{1}{2}(\hL_1^\dag\hL_1+\hL_2^\dag\hL_2)=\Bmatrix{1&0\\0&2}.
  \end{equation}
  Again, Frigerio's 2nd theorem (Corollary~\ref{coro:Frigerio2}) does not apply because $\Span\{\hL_1,\hL_2\}$ is not self-adjoint. In comparison to Example~\ref{example:Lind1101}, irreducibility is lost in this system through inclusion of the second dissipative process, which changed $\hK$ such that $\{\hL_1,\hL_2,\hK\}$ do not generate the full algebra $\A(\H)$.
  The unique steady state for this reducible system is again $\dm=|\up\ket\bra\up|$.
  Again, $\{\hL_\alpha,\hL_\alpha^\dag,\hH\}$ \emph{do} generate $\A(\H)$.
\end{example}

\begin{example}\label{example:LindspHsx}
  Lastly, consider the Markovian two-level system with
  \begin{equation}
  	\hH=h\hsigma^x\ \ \text{and}\ \ \hL=\hsigma^+=\Bmatrix{0&1\\0&0}
  	\Rightarrow\ \
  	2\hK\stackrel{\eqref{eq:algebraicCrit}}{=}2\mri\hH+\hL^\dag\hL=\Bmatrix{0&2\mri h\\2\mri h&1}.
  \end{equation}
  Frigerio's 2nd theorem (Corollary~\ref{coro:Frigerio2}) does not apply. For all $h\in\mathbb{R}$, including $h=0$, we have the trivial commutant $\{\hL,\hK\}'=\{z\id\}$ such that Corollary~\ref{coro:ncondition} allows for irreducibility. For $h=0$, however, the system has the unique pure steady state $\dm=|\up\ket\bra\up|$ and is reducible according to Theorems~\ref{theorem:faithful} and \ref{theorem:LKalgebra} [while, still, $\{\hL,\hL^\dag,\hH\}$ \emph{do} generate $\A(\H)$].
  For $h\neq 0$, the system is indeed irreducible and has the unique and full-rank steady state
  \begin{equation}
  	\dm=\Bmatrix{1-a&b\\b^*&a}\quad\text{with}\quad
  	a=\frac{4h^2}{1+8h^2},\ \
  	b=\frac{2\mri h}{1+8h^2}.
  \end{equation}
  With $\hL=\hsigma^+$ and $\hK\hL/h^2-2\mri K/h-2\hL=2\hsigma^-=2\hL^\dag$ we easily obtain the operator basis $\{\id,\hsigma^x, \hsigma^y, \hsigma^z\}$, showing that $\{\hL,\hK\}$ generate the full $\A(\H)$. Thus, Theorems~\ref{theorem:faithful} and \ref{theorem:LKalgebra} both establish irreducibility for $h\neq 0$. One could also employ Corollary~\ref{coro:FrigerioImproved} with $\mc{G}=\{\hL=\hsigma^+,\hK\hL/h^2-2\mri K/h-2\hL=2\hsigma^-\}$.
\end{example}

\section{Examples for Davies irreducible many-body systems I}\label{sec:examples-mb1}
Let us now exemplify how the algebraic criterion in Theorem~\ref{theorem:LKalgebra} can be used to establish the irreducibility of many-body systems. In Examples~\ref{example:tIsing} and \ref{example:XYZchain}, we consider spin-1/2 transverse Ising and XYZ chains with open boundary conditions (OBC) and boundary driving $\hL_1\propto \hsigma^+_1$ and $\hL_2\propto \hsigma^-_1$ on the first site. The algebraic proof methods used for these models are then employed in combination to establish irreducibility of a larger class of driven-dissipative spin-1/2 XYZ chains studied in the literature, where some Lindblad operators generate the full algebra on at least one or two suitable sites (Proposition~\ref{prop:XYZchains}).

\begin{example}[Boundary-driven transverse Ising chain with OBC]\label{example:tIsing}
Consider a Markovian spin-1/2 transverse Ising chain with open boundary conditions, having dissipative gain and loss terms at the left boundary,
\begin{align}
	\hH=\sum_{i=1}^Nh_i \hsigma^z_i+\sum_{i=1}^{N-1}J_i\hsigma^x_i\hsigma^x_{i+1}, \quad 
	\hL_1=\sqrt{\gamma_+}\,\hsigma_1^+,\quad \hL_2=\sqrt{\gamma_-}\,\hsigma_1^-,
\end{align}
where $\gamma_+,\gamma_->0$ and $h_i,J_i\neq 0$ $\forall i$.
We will show that the associative $\CC$-algebra $\W$ generated by $\{\hL_1,\hL_2,\hK\}$ \cite{Note2} is the  entire operator space $\A(\H)$. To begin with, we have
\begin{equation}\label{eq:tIsing-step1}
	\hsigma_1^+, \hsigma_1^- \in \W
	\quad\Rightarrow\quad
	\id,\hsigma_1^x, \hsigma_1^y, \hsigma_1^z \in \W
	\quad\Rightarrow\quad
	\hL_1^\dag,\hL_2^\dag\in \W.
\end{equation}
Hence, we can obtain $\hH$ by subtracting the Lindblad terms from $\hK$ [Eq.~\eqref{eq:algebraicCrit}], establishing that $\mri\hH=\hK-\hL_1^\dag\hL_1/2-\hL_2^\dag\hL_2/2\in \W$ and $\hH-h_1\hsigma_1^z\in \W$. We can now use commutation and multiplication to show that $\W$ contains the full algebra $\Span\{\id,\hsigma_2^x, \hsigma_2^y, \hsigma_2^z\}$ for site $i=2$:
\begin{align}
	&[\hsigma^y_1,\,\hH-h_1\hsigma^z_1 ]
	=J_1[\hsigma^y_1,\,\hsigma^x_1\hsigma^x_2]
	=-2\mri J_1 \hsigma^z_1\hsigma^x_2\in \W
	\quad\Rightarrow\quad
	\hsigma^z_1 \hsigma^z_1 \hsigma^x_2=\hsigma^x_2\in \W,\\
	&[\hsigma^x_2,\hH]=h_2[\hsigma^x_2,\hsigma^z_2]=-2\mri h \hsigma^y_2\in \W
	\quad\Rightarrow\quad
	\hsigma^x_2,\hsigma^y_2,\hsigma^z_2\in \W.
\end{align}
Subtracting all terms involving the first site from the Hamiltonian and repeating the process on site 2 establishes that $\hsigma_3^x, \hsigma_3^y, \hsigma_3^z\in \W$. Iterating the process further, we conclude that $\hsigma^x_i,\hsigma^y_i,\hsigma^z_i\in \W$ $\forall i$, implying that $\W=\A(\H)$ such that the system is irreducible according to Theorem~\ref{theorem:LKalgebra}.
\end{example}

\begin{example}[Boundary-driven spin-1/2 XYZ chain with OBC]\label{example:XYZchain}
Consider a Markovian spin-1/2 XYZ chain with open boundary conditions, having dissipative gain and loss terms at the left boundary,
\begin{align}
	\hH=\sum_{i=1}^Nh_i \hsigma^z_i+\sum_{i=1}^{N-1}(J^x_i\hsigma^x_i\hsigma^x_{i+1}+J^y_i\hsigma^y_i\hsigma^y_{i+1}+J^z_i\hsigma^z_i\hsigma^z_{i+1}),\
	\hL_1=\sqrt{\gamma_+}\,\hsigma_1^+,\ \hL_2=\sqrt{\gamma_-}\,\hsigma_1^-,
\end{align}
where $\gamma_+,\gamma_->0$ and, for every bond $(i,i+1)$, at least two of the three coupling constants $J^x_i,J^y_i,J^z_i$ are nonzero.
As in the previous example, we can show that the algebra $\W$ generated by $\{\hL_1,\hL_2,\hK\}$ is the  entire operator space $\A(\H)$. First of all, Eq.~\eqref{eq:tIsing-step1} holds and, thus, $\hH,\,\hH-h_1\hsigma_1^z\in \W$. We can now use commutation and multiplication to establish that $\W$ contains the full algebra $\Span\{\id,\hsigma_2^x, \hsigma_2^y, \hsigma_2^z\}$ for site $i=2$. In particular,
\begin{subequations}
\begin{align}
	&\hC_1^x:=[\hsigma^x_1,\,\hH-h_1\hsigma^z_1 ]/2\mri
	=J^y_1 \hsigma^z_1\hsigma^y_2 - J^z_1 \hsigma^y_1\hsigma^z_2 \in \W,\\
	&\hC_1^y:=[\hsigma^y_1,\,\hH-h_1\hsigma^z_1 ]/2\mri
	=J^z_1 \hsigma^x_1\hsigma^z_2 - J^x_1 \hsigma^z_1\hsigma^x_2 \in \W,\ \ \text{and}\\
	&\hC_1^z:=[\hsigma^z_1,\,\hH-h_1\hsigma^z_1 ]/2\mri
	=J^x_1 \hsigma^y_1\hsigma^x_2 - J^y_1 \hsigma^x_1\hsigma^y_2 \in \W.
\end{align}
\end{subequations}
If $J^x_1,J^y_1\neq 0$, we get
\begin{equation}
	\hsigma^x_2=[\hsigma_1^x,\hC_1^y]\,\hsigma_1^y/(2\mri J^x_1)\in\W,\quad
	\hsigma^y_2=[\hsigma_1^y,\hC_1^x]\,\hsigma_1^x/(2\mri J^y_1)\in\W
\end{equation}
and, hence, also $\hsigma^z_2=-\mri \hsigma^x_2\hsigma^y_2\in\W$. One can proceed similarly for the case $J^x_1,J^z_1\neq 0$ (where $J^y_1$ is allowed to be zero) and the case $J^y_1,J^z_1\neq 0$.
We can then subtract all terms involving the first site from the Hamiltonian, repeat the process to establish that $\hsigma_3^x, \hsigma_3^y, \hsigma_3^z\in \W$, and iterate to establish that $\W=\A(\H)$. So, the system is irreducible according to Theorem~\ref{theorem:LKalgebra}.
\end{example}

\begin{proposition}[Irreducibility of spin-1/2 XYZ chains]\label{prop:XYZchains}
  Markovian spin-1/2 XYZ chains with magnetic $z$ fields and open or periodic boundary conditions, Hamiltonian
  \begin{align}\label{eq:XYZchain}
	\hH=\sum_{i=1}^N (h^x_i\hsigma^x_i+h^y_i\hsigma^y_i+h^z_i\hsigma^z_i)
	   +\sum_{i}(J^x_i\hsigma^x_i\hsigma^x_{i+1}+J^y_i\hsigma^y_i\hsigma^y_{i+1}+J^z_i\hsigma^z_i\hsigma^z_{i+1}),
  \end{align}
  and Lindblad operators $\{\hL_\alpha\}$ are irreducible if,
  \begin{enumerate}[label=(\alph*)]\vspace{-0.3em}\itemsep-0.1em
   \item for every bond $(i,i+1)$, either at least two of the three coupling constants $\{J_i^x,J_i^y,J_i^z\}$ are non-zero, or exactly one coupling $J_i^{a}$ is nonzero and at least one field $h_i^{b}$ with $b\neq a$ is nonzero,
   \item the $\CC$-algebra $\W_L$ generated by $\{\hL_\alpha\}$ is a self-adjoint set, and
   \item $\W_L$ contains the full algebra $\Span\{\id,\hsigma^x_j,\hsigma^y_j,\hsigma^z_j\}$ for one boundary site ($j=1,N$) in the case of open boundary conditions, or the full algebra for any two neighboring sites $j,j+1$ (irrespective of boundary conditions).
   \end{enumerate}
\end{proposition}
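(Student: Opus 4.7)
The plan is to apply Theorem~\ref{theorem:LKalgebra} by showing that the multiplicative algebra $\W$ generated by $\{\hL_\alpha,\hK\}$ equals $\B(\H)$. I would first use hypothesis (b): since $\W_L\subseteq\W$ is self-adjoint, every $\hL_\alpha^\dag$ lies in $\W$, hence so does $\sum_\alpha \hL_\alpha^\dag\hL_\alpha$, and from $\hK=-\mri\hH-\frac12\sum_\alpha\hL_\alpha^\dag\hL_\alpha$ I extract $\hH\in\W$. Hypothesis (c) then supplies the base case: $\W$ contains the full single-site algebra $\Span\{\id,\hsigma_j^x,\hsigma_j^y,\hsigma_j^z\}$ on a boundary site in the OBC case, or on two neighboring sites $\{j,j+1\}$ in the general case.

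The core of the proof is an induction that propagates the full algebra along the chain. Suppose $\W$ contains the full algebra on a contiguous block $S=\{j_L,\dots,j_R\}$. Then the restriction $\hH^{(S)}$ of $\hH$ to interactions acting only on $S$ is in $\W$, so $\hH':=\hH-\hH^{(S)}\in\W$, and $\hH'$ contains only the bond term at $(j_R,j_R+1)$ together with terms supported on sites $>j_R$. Since any operator purely on sites $>j_R$ commutes with $\hsigma_{j_R}^a$, the commutator $[\hsigma_{j_R}^a,\hH']$ isolates the bond contribution $[\hsigma_{j_R}^a,\sum_b J_{j_R}^b\hsigma_{j_R}^b\hsigma_{j_R+1}^b]$. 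In Case~A of hypothesis (a) — two of $\{J_{j_R}^x,J_{j_R}^y,J_{j_R}^z\}$ nonzero — I reproduce the manipulation of Example~\ref{example:XYZchain}: forming $\hC_{j_R}^{x,y,z}$ and right-multiplying by appropriate $\hsigma_{j_R}^b$ extracts each of $\hsigma_{j_R+1}^x,\hsigma_{j_R+1}^y,\hsigma_{j_R+1}^z$. In Case~B — $h_{j_R}\neq 0$ and exactly one $J_{j_R}^{a\neq z}$ nonzero — the bond commutator first yields a single component $\hsigma_{j_R+1}^a$, and the transverse-Ising argument of Example~\ref{example:tIsing} (commuting $\hsigma_{j_R+1}^a$ with the remaining part of $\hH'$, using the field $h_{j_R+1}$ together with the structure of the adjacent bond) then produces the remaining two components. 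After this step, the full algebra on $S\cup\{j_R+1\}$ lies in $\W$, and a symmetric argument extends the block to the left, which is the mode of propagation needed for PBC and for the two-site OBC base case.

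Iterating the inductive step exhausts all sites (including the bond $(N,1)$ closing the ring in the PBC case), so $\hsigma_i^x,\hsigma_i^y,\hsigma_i^z\in\W$ for every site~$i$. All tensor products $\prod_i\hsigma_i^{a_i}$ then lie in $\W$, these span $\B(\H)$, and Theorem~\ref{theorem:LKalgebra} delivers Davies irreducibility. The main obstacle I anticipate is the bookkeeping in Case~B: when the coupling type of the adjacent bond $(j_R+1,j_R+2)$ differs from that of $(j_R,j_R+1)$, the commutator $[\hsigma_{j_R+1}^a,\hH']$ picks up extra two-site terms supported on sites $j_R+1$ and $j_R+2$ that compete with the desired $h_{j_R+1}\hsigma_{j_R+1}^b$ contribution. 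Disentangling these requires either conjugating $\hH$ by $\hsigma_{j_R+1}^a$ to project onto the anticommuting part, or combining the obtained two-site operators with the next-bond commutators before isolating the single-site generator — this is where the precise interplay between conditions (a) and (c) enters, and it is the only nontrivial calculation in the argument.
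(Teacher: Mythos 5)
Your proposal follows the paper's proof essentially verbatim: condition (b) yields $\hH\in\W$ from $\hK$, condition (c) seeds the full single-site algebra, and the block is then grown bond by bond using the scheme of Example~\ref{example:XYZchain} when two couplings are nonzero or that of Example~\ref{example:tIsing} in the field-plus-one-coupling case, with Theorem~\ref{theorem:LKalgebra} concluding. The Case-B interference from the adjacent bond that you flag as the remaining bookkeeping is likewise left implicit in the paper, whose proof simply defers to the two examples, so your account contains everything the authors' own argument does.
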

\begin{proof}
In generalization of Examples~\ref{example:tIsing} and \ref{example:XYZchain}, the three conditions ensure that the algebra $\W$ generated by $\{\hL_\alpha,\hK\}$ is the full operator space $\A(\H)$ and irreducibility then follows from Theorem~\ref{theorem:LKalgebra}. 
Condition (b) ensures that $\hH\in\W$: For every Lindblad operator $\hL_\alpha$, $\W_L\subseteq\W$ contains $\hL_\alpha^\dag$ such that we can subtract $\sum_\alpha \hL_\alpha^\dag\hL_\alpha/2$ from $\hK$ [Eq.~\eqref{eq:algebraicCrit}] to show that $\hH\in\W$. Condition (c) allows us to start from the full algebra on a boundary site or two neighboring sites, to iteratively establish that $\W$ contains the full algebra for all sites $i=1,\dotsc,N$ as discussed in Examples~\ref{example:tIsing} and \ref{example:XYZchain}. According to condition (a), we will use the scheme discussed in Example~\ref{example:XYZchain} for bond $(i,i+1)$ if at least two of the three coupling constants $\{J_i^x,J_i^y,J_i^z\}$ are non-zero or, otherwise, use the scheme discussed in Example~\ref{example:tIsing}.
\end{proof}

It is remarkable that such many-body models can be rendered irreducible by having dissipators on just one or two sites.
Markovian models covered by Proposition~\ref{prop:XYZchains} have been studied with different methods, for example, in Refs.~\cite{Prosen2008-101,Benenti2009-80,Prosen2010-12,Znidaric2010-43,Znidaric2010-5,Znidaric2011-83,Prosen2011-106,Popkov2012-12,Popkov2013-2,Buca2012-14,Buca2014-112,Landi2014-90,Znidaric2014-112,Znidaric2015-92,Prosen2015-48,Schuab2016-94,Pereira2017-95,Monthus2017-4,Brenes2018-98,Reis2020-101,Oliveira2020-53,Oliveira2020-129}.
It is straightforward to translate the proposition to one-dimensional higher-spin, fermionic, and bosonic models. For a direct extension to systems in $d>1$ spatial dimensions, one needs more constraints on the set of Lindblad operators; one could, e.g., require that $\W_L$ comprises the full operator algebra for an appropriate $d-1$ dimensional hyperplane.

\section{Examples for Davies irreducible many-body systems II}\label{sec:examples-mb2}
Let us apply Theorem~\ref{theorem:LKalgebra} to a prominent class of boundary driven spin-1/2 chains with Lindblad operators like
\begin{equation}\label{eq:Lsp1sm2}
	\hL_1=\sqrt{\gamma_+}\,\hsigma_1^+\quad\text{and}\quad
	\hL_2=\sqrt{\gamma_-}\,\hsigma_N^-
\end{equation}
at the two ends $i=1,N$ such that, now, conditions (b) and (c) of Proposition~\ref{prop:XYZchains} are not obeyed. We will refer to these models as \emph{maximally boundary driven}.
Spin-1/2 XXZ chains with Lindblad operators \eqref{eq:Lsp1sm2} and further similar models are particularly interesting as they turn out to be Yang-Baxter integrable \cite{Prosen2013-15,Karevski2013-110,Ilievski2014-882,Popkov2015-114,Prosen2015-48,Ilievski2017-3,Popkov2017-95,deLeeuw2021-126}.

\begin{example}[Maximally boundary-driven spin-1/2 XX chain]\label{example:XXchainMax}
Consider a Markovian spin-1/2 XX chain with open boundary conditions, having dissipative gain and loss terms at the left and right boundaries, respectively,
\begin{align}
	\hH=\sum_{i=1}^{N}h_i\hsigma^z_i +\sum_{i=1}^{N-1}J_i\,(\hsigma^x_i\hsigma^x_{i+1}+\hsigma^y_i\hsigma^y_{i+1}),\quad
	\hL_1=\sqrt{\gamma_+}\,\hsigma_1^+,\quad \hL_2=\sqrt{\gamma_-}\,\hsigma_N^-,
\end{align}
where $\gamma_+,\gamma_->0$ and $J_i\neq 0$ $\forall i$.
We can show that the associative $\CC$-algebra $\W$ generated by $\{\hL_1,\hL_2,\hK\}$ \cite{Note2} is the  entire operator space $\A(\H)$. Starting from $\{\hsigma_1^+,\hsigma_N^-,\hK\}\subset\W$, the strategy is to commute multiple times with $\hK$ [cf.\ Eq.~\eqref{eq:algebraicCrit}] until we obtain $\hsigma_1^-$ and  $\hsigma_N^+$. One can then follow Example~\ref{example:XYZchain} to conclude that $\W=\A(\H)$.

Starting at the left boundary, we have
\begin{subequations}
\begin{align}
	&[\hsigma^+_1,\hK]-(\gamma_+/2-2\mri h_1)\hsigma^+_1 =\mri J_1 \hsigma^z_1 \hsigma^+_2\,\in\W,\\
	\Rightarrow\quad
	&[\hsigma^z_1\hsigma^+_2,\hK]-2\mri J_1\hsigma^+_1 + 2\mri h_2\hsigma^z_1 \hsigma^+_2 
	  =\mri J_2 \hsigma^z_1\hsigma^z_2 \hsigma^+_3\,\in\W,\quad \dotsc\quad ,\\
	\Rightarrow\quad
	&\Big[\prod_{j=1}^{i-1}\hsigma^z_j\,\hsigma^+_i,\hK\Big]
	  -2\mri J_{i-1}\prod_{j=1}^{i-2}\hsigma^z_j\,\hsigma^+_{i-1}
	  +2\mri h_i\prod_{j=1}^{i-1}\hsigma^z_j\, \hsigma^+_i 
	  =\mri J_i \prod_{j=1}^{i}\hsigma^z_j\, \hsigma^+_{i+1}\,\in\W
\end{align}
\end{subequations}
for $i=2,\dotsc,N-1$. So, we have $\hsigma^z_1\dotsb\hsigma^z_{N-1}\hsigma^+_N\in\W$ and, progressing similarly from the right boundary, also $\hsigma^-_1\hsigma^z_2\dotsb\hsigma^z_N\in\W$ such that
\begin{subequations}\label{eq:XXchainMax-s1m-sNp}
\begin{align}
	&(\hsigma^-_1\hsigma^z_2\dotsb\hsigma^z_N)(\hsigma^z_1\dotsb\hsigma^z_{N-1}\hsigma^+_N)=\hsigma^-_1\hsigma^+_N\in\W\\
	\Rightarrow\quad 
	&(\hsigma^-_1\hsigma^+_N)\hsigma^-_N+\hsigma^-_N(\hsigma^-_1\hsigma^+_N)=\hsigma^-_1\in\W,\quad
	(\hsigma^-_1\hsigma^+_N)\hsigma^+_1+\hsigma^+_1(\hsigma^-_1\hsigma^+_N)=\hsigma^+_N\in\W.
\end{align}
\end{subequations}
Thus, we have $\{\hsigma^+_1,\hsigma^-_1,\hsigma^+_N,\hsigma^-_N,\hK\}\in\W$ and can proceed as in Example~\ref{example:XYZchain} to conclude that $\W=\A(\H)$. Hence, the system is irreducible according to Theorem~\ref{theorem:LKalgebra}.
\end{example}

\begin{example}[Maximally boundary-driven transverse Ising chain]
Consider a Markovian spin-1/2 transverse Ising chain with open boundary conditions, having dissipative gain and loss terms at the left and right boundaries, respectively,
\begin{align}
	\hH=h\sum_{i=1}^N \hsigma^z_i+J\sum_{i=1}^{N-1}\hsigma^x_i\hsigma^x_{i+1}, \quad 
	\hL_1=\sqrt{\gamma_+}\,\hsigma_1^+,\quad \hL_2=\sqrt{\gamma_-}\,\hsigma_N^-,
\end{align}
where $\gamma_+,\gamma_->0$. For brevity, we assume uniform magnetic fields and couplings $h,J\neq 0$. Following a similar strategy as in the previous example, we will find that the algebra $\W$ generated by $\{\hL_1,\hL_2,\hK\}$ is the  entire operator space $\A(\H)$. For the following, we absorb $\gamma_+$ into $h,J$, and $\gamma_-$ ($\gamma_+\to 1$). Starting with $\{\hsigma^+_1,\hsigma^-_N,\hK\}\subset \W$, the first few steps are
\begin{subequations}
\begin{align}
	&\textstyle\hA_2:=\frac{1}{\mri J}\left([\hsigma^+_1,\hK] - (\frac{1}{2}-2h)\hsigma^+_1\right)
	= \hsigma_1^z\hsigma_2^x \in \W,\\
	\Rightarrow\quad
	&\textstyle\hB_2:=\frac{1}{2h}\,[\hA_2,\hK]
	=-\frac{J}{h}\, \hsigma_1^y + \hsigma_1^z\hsigma_2^y\in \W,\\
	\Rightarrow\quad
	&\textstyle\hA_3:=\frac{1}{2J}\,\left([\hB_2,\hK] + 2h\hA_2\right)-\frac{J}{h}\,\hsigma_1^z\hsigma_2^x
	=(1+\frac{\mri}{4h})\hsigma_1^x + \hsigma_1^z\hsigma_2^z\hsigma_3^x \in \W,\\
	\Rightarrow\quad
	&\textstyle\hB_3:=\frac{1}{2h}\,\left([\hA_3,\hK] + 2J \hB_2\right)
	=\frac{16h^2+8\mri h - 16J^2-1}{16h^2}\,\hsigma_1^y + \hsigma_1^z\hsigma_2^z\hsigma_3^y \in \W.
\end{align}
\end{subequations}
We can continue this sequence by commuting repeatedly with $\hK$, and generally have
\begin{equation}\label{eq:tIsingMax-AiBi}
	\hA_i= a_i\sigma_1^x+ \prod_{j=1}^{i-1}\hsigma^z_j\,\hsigma^x_{i}\in\W
	\quad\text{and}\quad
	\hB_i= b_i\sigma_1^y+ \prod_{j=1}^{i-1}\hsigma^z_j\,\hsigma^y_{i}\in\W
\end{equation}
for $i=2,\dotsc,N$ as, for $i\geq 3$,
\begin{subequations}
\begin{align}
	\hA_i&\textstyle\equiv \frac{1}{2J}\,\left( [\hB_{i-1},\hK] + 2h \hA_{i-1}\right) - b_{i-1} \hsigma_1^z\hsigma_2^x
	= \left(\frac{h}{J}\, a_{i-1} - \frac{\mri+4h}{4J}\,b_{i-1}\right)\sigma_1^x
	            + \prod_{j=1}^{i-1}\hsigma^z_j\,\hsigma^x_{i},\\
	\hB_i&\textstyle\equiv \frac{1}{2h}\,\left([\hA_i,\hK]+2J\hB_{i-1}\right)
	= \left(\frac{\mri +4h}{4J}\, a_{i-1} + \frac{1-8\mri h-16h^2+16J^2}{16hJ}\,b_{i-1}\right)\sigma_1^y
	            + \prod_{j=1}^{i-1}\hsigma^z_j\,\hsigma^y_{i}.
\end{align}
\end{subequations}
These two equations imply the following relation between subsequent pairs of coefficients $(a_i,b_i)$ in Eq.~\eqref{eq:tIsingMax-AiBi}.
\begin{equation}\label{eq:tIsingMax-T}
	\Pmatrix{a_i\\b_i}= T \Pmatrix{a_{i-1}\\b_{i-1}}\quad\text{with}\quad
	T = \Bmatrix{\frac{h}{J} & - \frac{\mri+4h}{4J} \\
	             \frac{\mri +4h}{4J} & \frac{1-8\mri h-16h^2+16J^2}{16hJ} }.
\end{equation}
After $N-1$ steps, we arrive at the pair of operators $\hA_{N},\hB_{N}\in\W$ and can form the linear combination
\begin{equation}\label{eq:tIsingMax-rightIter}
	b_N \hA_{N} +\mri a_N \hB_{N} - 2 a_N b_N \hsigma^+_1
	\stackrel{\eqref{eq:tIsingMax-AiBi}}{=}
	  \prod_{j=1}^{N-1}\hsigma^z_j\,\left(b_N\hsigma^x_N+\mri a_N\hsigma^y_N\right)
	=:\prod_{j=1}^{N-1}\hsigma^z_j\,\left(\alpha\hsigma^+_N+\beta\hsigma^-_N\right).
\end{equation}
Proceeding analogously, starting with $[\hsigma^-_N,\hK]$ from the right boundary, we also have 
\begin{equation}\label{eq:tIsingMax-leftIter}
	\left(\tilde{\alpha}^*\hsigma^-_1+\tilde{\beta}^*\hsigma^+_1\right)\prod_{j=2}^{N}\hsigma^z_j\in W,
	\quad\text{where}\ \
	(\tilde{\alpha},\tilde{\beta})=(\alpha,\beta)
	\ \ \text{if}\ \ \gamma_+=\gamma_-.
\end{equation}
In analogy to Eq.~\eqref{eq:XXchainMax-s1m-sNp}, the product of the operators \eqref{eq:tIsingMax-rightIter} and \eqref{eq:tIsingMax-leftIter} gives
\begin{subequations}
\begin{align}
	&\left(\tilde{\alpha}^*\hsigma^-_1+\tilde{\beta}^*\hsigma^+_1\right)\,\hsigma^z_1\hsigma^z_N
	\left(\alpha\hsigma^+_N+\beta\hsigma^-_N\right)
	= \left(\tilde{\alpha}^*\hsigma^-_1-\tilde{\beta}^*\hsigma^+_1\right)\,
	  \left(\alpha\hsigma^+_N-\beta\hsigma^-_N\right)\in\W\\
	\Rightarrow\quad
	&\left(\tilde{\alpha}^*\hsigma^-_1-\tilde{\beta}^*\hsigma^+_1\right)\,
	  \left\{\left(\alpha\hsigma^+_N-\beta\hsigma^-_N\right)\,,\,
	  \hsigma_N^-\right\}+\alpha\tilde{\beta}^*\hsigma^+_1
	 =\alpha\tilde{\alpha}^*\,\hsigma^-_1\in\W,\\
	\Rightarrow\quad
	&\left\{\left(\tilde{\alpha}^*\hsigma^-_1-\tilde{\beta}^*\hsigma^+_1\right)\,,\,
	  \hsigma_1^+\right\}\,
	  \left(\alpha\hsigma^+_N-\beta\hsigma^-_N\right)+\tilde{\alpha}^*\beta\hsigma^-_N
	 =\alpha\tilde{\alpha}^*\,\hsigma^+_N\in\W,
\end{align}
\end{subequations}
where $\{\hA,\hB\}=\hA\hB+\hB\hA$ is the anti-commutator. By analyzing the the dominant eigenvector of the transfer matrix $T$ in Eq.~\eqref{eq:tIsingMax-T}, one finds that, at least for large $N$, $\alpha\tilde{\alpha}^*$ is generally nonzero such that $\{\hsigma^+_1,\hsigma^-_1,\hsigma^+_N,\hsigma^-_N,\hK\}\in\W$, and we can proceed as in Example~\ref{example:tIsing} to conclude that $\W=\A(\H)$. Hence, the system is irreducible according to Theorem~\ref{theorem:LKalgebra}.
\end{example}

\begin{example}[Maximally boundary-driven spin-1/2 XXZ chain]
Consider a Markovian spin-1/2 XXZ chain with open boundary conditions, having dissipative gain and loss terms at the left and right boundaries, respectively,
\begin{align}
	\hH=J\sum_{i=1}^{N-1}(\hsigma^x_i\hsigma^x_{i+1}+\hsigma^y_i\hsigma^y_{i+1}+\Delta\hsigma^z_i\hsigma^z_{i+1}), \quad 
	\hL_1=\sqrt{\gamma_+}\,\hsigma_1^+,\quad \hL_2=\sqrt{\gamma_-}\,\hsigma_N^-,
\end{align}
where $\gamma_+,\gamma_->0$ and $J\neq 0$.
It has been argued that this and closely related systems have a unique steady state based on the fact that either $\{\hL_\alpha,\hH\}$ or $\{\hL_\alpha,\hL_\alpha^\dag,\hH\}$ generate the full algebra $\A(\H)$ \cite{Prosen2012-86,Prosen2013-15,Salerno2013-87,Ilievski2014-1,Prosen2015-48,Essink2020-2}. As discussed previously and illustrated in Sec.~\ref{sec:examples-qubit}, these criteria are however \emph{not} sufficient.

Nevertheless, the system has indeed a unique faithful steady state: Employing the Yang-Baxter integrability of the model \cite{Prosen2013-15}, Prosen showed that there exists a full-rank steady state (of matrix-product form) \cite{Prosen2011-106,Prosen2011-107b,Prosen2015-48} and $\{\hL_\alpha,\hL_\alpha^\dag,\hH\}'={z\id}$ as follows from the discussion in Example~\ref{example:XYZchain} or Ref.~\cite{Prosen2012-86}. Thus, the system obeys the preconditions of Frigerio's first theorem (Theorem~\ref{theorem:Frigerio1}) and is Davies irreducible.
\end{example}

\section{Dissipative stabilization of pure states}\label{sec:stabilization}
Another interesting consequence of Theorem~\ref{theorem:reducibility} concerns the stabilization of pure steady states (a.k.a.\ dark states) as studied, for example, in Refs.~\cite{Wang2001-64-6,Carvalho2001-86,Jakobczyk2002-35,Kraus2004-92,Kraus2008-78,Ticozzi2008-53,Ticozzi2009-45,Verstraete2009-5,Diehl2008-4,Diehl2010-105,Cho2011-106,Barreiro2011-470,Diehl2011-7,Ticozzi2012-370,Koga2012-85,Ikeda2013-87b,Lin2013-504,Leghtas2013-88,Reiter2013-88,Bardyn2012-109,Bardyn2013-15,Carr2013-111b,Ticozzi2014-14,Budich2015-91,Marino2016-116}. In particular, Theorem~1 of Ref.~\cite{Kraus2008-78} follows by setting $\hP=| \psi\ket\bra\psi|$:
\begin{corollary}[Stabilization of pure states]\label{coro:stabilization}
  A Markovian quantum system \eqref{eq:QME} stabilizes the pure state $|\psi\ket$ in the sense that $\L(|\psi\ket\bra\psi|)=0$ if and only if $|\psi\ket$ is a common eigenvector of all Lindblad operators $\hL_\alpha$ and $\hK=\mri\hH+\frac{1}{2}\sum_\alpha \hL^\dag_\alpha\hL_\alpha$.
\end{corollary}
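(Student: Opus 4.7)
The plan is to reduce this corollary to Theorem~\ref{theorem:reducibility} by taking the candidate reducing projection to be the rank-one projector $\hP=|\psi\ket\bra\psi|$ (with $|\psi\ket$ normalized). The image of $\hP$ is the one-dimensional subspace $\mathbb{C}|\psi\ket$, so the two algebraic criteria $(\id-\hP)\hL_\alpha\hP=0$ and $(\id-\hP)\hK\hP=0$ furnished by that theorem collapse to the statements $\hL_\alpha|\psi\ket\in\mathbb{C}|\psi\ket$ and $\hK|\psi\ket\in\mathbb{C}|\psi\ket$, which is exactly the common-eigenvector condition asserted by the corollary. So everything comes down to equating $\L(|\psi\ket\bra\psi|)=0$ with Davies reducibility of $\hP$.

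For the ``only if'' direction, I would invoke the observation recalled below Theorem~\ref{theorem:faithful} that the support of any steady state is an invariant subspace, combined with Proposition~\ref{prop:InterpretDaviesEvans}(b), which equates Davies reducibility with the existence of a nontrivial invariant subspace. Since $|\psi\ket\bra\psi|$ is a steady state by hypothesis, its support $\mathbb{C}|\psi\ket$ is invariant, so $\hP$ Davies-reduces the dynamics and Theorem~\ref{theorem:reducibility} then delivers the eigenvector conditions.

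For the converse, the cleanest route is a direct substitution: writing $\L(\dm)=\hK\dm+\dm\hK^\dag+\sum_\alpha\hL_\alpha\dm\hL_\alpha^\dag$ and inserting $\dm=|\psi\ket\bra\psi|$ with $\hL_\alpha|\psi\ket=\lambda_\alpha|\psi\ket$ and $\hK|\psi\ket=\mu|\psi\ket$ yields $\L(|\psi\ket\bra\psi|)=(2\operatorname{Re}\mu+\sum_\alpha|\lambda_\alpha|^2)\,|\psi\ket\bra\psi|$; the prefactor vanishes identically once $\mu=\bra\psi|\hK|\psi\ket$ is evaluated using the definition of $\hK$. There is no serious obstacle here; the only subtlety is that Theorem~\ref{theorem:reducibility} requires $\hP\neq 0,\id$, which is automatic for $\dim\H\geq 2$, while the one-dimensional case is trivial.
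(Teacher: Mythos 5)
Your proposal is correct and rests on the same central idea as the paper: specialize Theorem~\ref{theorem:reducibility} to the rank-one projector $\hP=|\psi\ket\bra\psi|$, for which the conditions $(\id-\hP)\hL_\alpha\hP=0$ and $(\id-\hP)\hK\hP=0$ are exactly the common-eigenvector statement. Where you diverge is in how you identify ``$|\psi\ket\bra\psi|$ is a steady state'' with ``$\hP$ Davies-reduces the dynamics.'' The paper handles both directions at once with a one-line trace argument: for rank-one $\hP$ the Davies condition collapses to $\L(\hP)=\hP\L(\hP)\hP=\bra\psi|\L(\hP)|\psi\ket\,\hP$, and since $\Tr\L(\hP)=0$ this forces $\L(\hP)=0$, so Davies reducibility of a pure-state projector is literally equivalent to stationarity. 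You instead split the equivalence: for the ``only if'' direction you import the external fact that the support of a steady state is an invariant subspace (stated in the paper only with citations) together with Proposition~\ref{prop:InterpretDaviesEvans}(b), and for the converse you substitute directly into the Lindblad form and check that $2\operatorname{Re}\mu+\sum_\alpha|\lambda_\alpha|^2=0$ from $\mu=\bra\psi|\hK|\psi\ket$. Both steps are sound and your eigenvalue bookkeeping is right, but the paper's route is more economical and self-contained. Note that you could drop the external citation entirely: since $\hP\dm\hP=\bra\psi|\dm|\psi\ket\,\hP$, the hypothesis $\L(\hP)=0$ immediately gives $\L(\hP\dm\hP)=\bra\psi|\dm|\psi\ket\,\L(\hP)=0=\hP\L(\hP\dm\hP)\hP$ for all $\dm$, which is Davies reducibility by Eq.~\eqref{eq:Lreducible} with no appeal to support arguments.
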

\begin{proof}
For $\hP=|\psi\ket\bra\psi|$, Eq.~\eqref{eq:algebraicCrit} is equivalent to $|\psi\ket$ being an eigenstate of the $\hL_\alpha$ and $\hK$. And according to Definition~\ref{def:reducibility} and Eq.~\eqref{eq:Lreducible}, the pure-state projection $\hP=|\psi\ket\bra\psi|$ Davies-reduces the Markovian quantum dynamics if
\begin{equation}\label{eq:stabilizeProof}
	\L(\hP)=\hP\L(\hP)\hP\quad\Leftrightarrow\quad
	\L\big(|\psi\ket\bra\psi|\big)=\bra\psi|\L\big(|\psi\ket\bra\psi|\big)\,|\psi\ket\,\, |\psi\ket\bra\psi|.
\end{equation}
Taking the trace, we have $0=\Tr \L\big(|\psi\ket\bra\psi|\big)=\bra\psi|\L\big(|\psi\ket\bra\psi|\big)\,|\psi\ket$ such that Eq.~\eqref{eq:stabilizeProof} implies $\L(|\psi\ket\bra\psi|)=0$.
Hence, Corollary~\ref{coro:stabilization} follows from Theorem~\ref{theorem:reducibility}.
\end{proof}

It is an interesting question, what types of pure states can be stabilized through Markovian quantum dynamics  when imposing physical constraints on the Hamiltonian and Lindblad operators. As an example, we point out the following.
\begin{proposition}[Difficulty of stabilizing a Fermi-sea]
  Fermionic Markovian lattice systems with non-trivial local Lindblad operators (all having finite spatial support) cannot stabilize pure Fermi-sea states with particle number density $n\neq 0,1$ (no particles or completely filled bands).
\end{proposition}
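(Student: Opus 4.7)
The strategy is to reduce the claim to Corollary~\ref{coro:stabilization}: a pure state $|\Psi\ket$ is stabilized by the Markovian dynamics only if $|\Psi\ket$ is a simultaneous eigenvector of every Lindblad operator $\hL_\alpha$ (and of $\hK$). It therefore suffices to prove that \emph{no non-trivial local (finite-range) operator $\hL$ admits a pure Fermi-sea state $|\Psi\ket$ of density $n\in(0,1)$ as an eigenvector}; the presence of even one such stabilizing dissipator is then impossible.

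I would then fix a finite region $R$ containing the support of $\hL$, so that $\hL$ factors as $\hL_R\otimes\id_{R^c}$ on the fermionic Fock space, and perform the Schmidt decomposition of $|\Psi\ket$ across the bipartition $R\,|\,R^c$, writing $|\Psi\ket=\sum_i \sqrt{p_i}\,|i\ket_R\otimes|i\ket_{R^c}$ with $p_i>0$ and orthonormal Schmidt vectors. The hypothesis $\hL|\Psi\ket=\lambda|\Psi\ket$ and the linear independence of the $|i\ket_{R^c}$ immediately force $\hL_R|i\ket_R=\lambda|i\ket_R$ for every Schmidt state that appears with positive weight. The problem thus collapses to showing that \emph{every} Schmidt weight is strictly positive, i.e., that the reduced density matrix $\dm_R:=\Tr_{R^c}|\Psi\ket\bra\Psi|$ has full rank on the local Fock space $\mc{F}_R$; if so, $\{|i\ket_R\}$ is a complete basis of $\mc{F}_R$, so $\hL_R=\lambda\,\id_R$ and $\hL$ is trivial, contradicting the assumption.

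The core of the proof is therefore the full-rank property of $\dm_R$. Since the Fermi-sea state is pure Gaussian, $\dm_R$ is again Gaussian and its rank is governed by the spectrum of the restricted one-body correlation matrix $C_R:=P_R C P_R$, where $C=\sum_{k\in\mc{K}}|k\ket\bra k|$ is the projector onto the occupied Bloch modes. Standard Gaussian-state theory yields that $\dm_R$ is full rank if and only if every eigenvalue of $C_R$ lies strictly inside $(0,1)$. A boundary eigenvalue $1$ (or $0$) would require a non-zero vector supported on the finite set $R$ whose lattice Fourier transform $\tilde v(k)$ vanishes identically on the complement of $\mc{K}$ (respectively on $\mc{K}$). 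But $\tilde v(k)$ is a finite trigonometric polynomial and hence real-analytic in $k$; since $0<n<1$ both $\mc{K}$ and its complement have positive measure in the Brillouin zone, so $\tilde v\equiv 0$ and no such vector exists. Therefore $C_R$ has spectrum strictly in $(0,1)$, $\dm_R$ has full rank, and the Schmidt argument above produces the desired contradiction.

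I expect the Paley--Wiener-style analyticity step to be the only substantive obstacle; the Schmidt reduction is purely algebraic. Two extensions warrant a short comment in the proof: (i) when the lattice carries internal indices (spin, multiple bands, or a BdG structure for superconducting cases) the correlation matrix becomes a block projector, but the occupied sub-bundle is still given by analytic eigenprojectors in $k$ and the same analyticity argument goes through; and (ii) in strictly finite volume the analyticity step must be replaced by a linear-algebra/dimension count, which still rules out accidental intersections between $\operatorname{Im}C$ and vectors supported on a single finite $R$ for generic fillings, matching the physically natural regime of local dissipators in an extensive system.
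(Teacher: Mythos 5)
Your proposal is correct, and it takes a genuinely different---and substantially more rigorous---route than the paper. The paper's own proof is explicitly schematic: it writes the local $\hL_\alpha$ as polynomials in the momentum-space ladder operators, notes that locality forces the coefficient functions to be smooth in $\vk$, and then appeals to the intuition that an operator with smooth momentum-space structure cannot have as an eigenstate a Fermi sea with a sharp Fermi surface. You replace this heuristic with a complete chain of implications: reduction to Corollary~\ref{coro:stabilization}, the Schmidt-decomposition observation that an eigenvector relation $\hL|\psi\ket=\lambda|\psi\ket$ for $\hL$ supported on $R$ forces $\hL_R|i\ket_R=\lambda|i\ket_R$ on every Schmidt vector with positive weight, the reduction to full rank of $\dm_R$ (equivalently, the truncated correlation matrix $P_RCP_R$ having spectrum strictly inside $(0,1)$), and finally the Paley--Wiener step that a nonzero trigonometric polynomial cannot vanish on a set of positive measure. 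The last step is the rigorous incarnation of the paper's ``smooth versus sharp'' intuition, so the two arguments share the same physical core, but your intermediate lemma---\emph{any pure state whose reduced density matrix on every finite region has full rank cannot be an eigenvector of any non-trivial local operator}---is a clean, reusable general principle that the paper does not isolate. Minor points to attend to if you write this up: the factorization $\hL=\hL_R\otimes\id_{R^c}$ for odd fermionic operators requires a Jordan--Wigner ordering that places $R$ first (or a graded tensor-product convention); and, as you already note, in strictly finite volume the analyticity argument must be replaced by a Vandermonde-type dimension count, which still works for fixed $|R|$ once $N$ is large enough that $|R|\leq\min(M,N-M)$ with $M$ the number of occupied modes. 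Neither issue affects the validity of the argument in the regime the proposition addresses.
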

\begin{proof}
Schematically, let us consider spinless quasi-free fermions on a $d$-dimensional lattice with dispersion relation $\veps(\vk)$, which is a smooth function of the single-particle wavevector $\vk$. The corresponding Fermi-sea states are of the form $|\psi\ket=\prod_{\veps(\vk)<\veps_\text{F}}\hc_{\vk}^\dag |0\ket$, where $|0\ket$ is the vacuum state, $\hc_\vk^\dag=(2\pi)^{-d/2}\sum_j e^{-\mri \vk\cdot\vec{r}_j}\hc_j^\dag$ creates a particle in quasi-momentum (Bloch) state $|\vk\ket$, and $\veps_\text{F}$ is the Fermi energy. For $|\psi\ket$ to be stabilized by the Markovian dynamics, it needs to be an eigenstate of all Lindblad operators.
By assumption, the Lindblad operators $\hL_\alpha$ are local, i.e., have a finite-range spatial support. The associated fermionic operator space has finite dimension, e.g., being spanned by $\{\id,\hc_j^\pdag,\hc_j^\dag,\hc_j^\dag\hc_j^\pdag\}$ for a single site $j$. So, when expressing $\hL_\alpha$ as a polynomial in the momentum-space ladder operators $\{\hc_\vk^\pdag,\hc_\vk^\dag\}$,
\begin{equation*}
	\hL_\alpha = \sum_{m,n} \int\ud^d k_1\dotsb\ud^d k_n\ud^d k'_1\dotsb\ud^d k'_m\,\,
	             f_{\alpha}^{(m,n)}(\vk_1,\dotsc,\vk_m,\vk'_1,\dotsc,\vk'_n)\,\,
	             \hc_{\vk_1}^\dag\dotsb\, \hc_{\vk_m}^\dag \, \hc_{\vk_n'}^\pdag\dotsb\, \hc_{\vk_1'}^\pdag,
\end{equation*}
the sum is finite and the prefactors $f_{\alpha}^{(m,n)}$ are analytic functions of the quasi-momenta. But the Fermi-sea state $|\psi\ket$ has a sharp feature at the Fermi surface, and can hence not be an eigenstate of such local Lindblad operators unless they are trivial ($\hL_\alpha\propto\id$).
\end{proof}

\section{Discussion}
\emph{Summary. -- }
We have discussed necessary and sufficient algebraic criteria for the Davies irreducibility of Markovian quantum systems \eqref{eq:QME}, which implies the absence of invariant subspaces and that such systems have unique faithful steady states (or, potentially, no steady state at all if $\dim\H=\infty$ \cite{Note1}). The criteria are based on properties of the Lindblad operators $\{\hL_\alpha\}$ and the operator $\hK=\mri\hH+\sum_\alpha \hL^\dagger_\alpha\hL_\alpha$. Davies irreducibility should not be confused with Evans irreducibility which is associated with the absence of strong symmetries but does not necessarily imply a unique steady state. As demonstrated in Sec.~\ref{sec:motivation}, one can easily construct Markovian systems that have $\{\hL_\alpha, \hL_\alpha^\dag,\hH\}'=\{z\id\}$ and are, hence, Evans irreducible but have multiple steady states.
In various examples we saw that a change of Hamiltonian terms as well as the addition or removal of dissipators can render a Davies reducible system irreducible and vice versa. Also, one-dimensional many-body models can be rendered irreducible by having suitable dissipators on just one or two sites. Furthermore, if $\{\hL_\alpha,\hL_\alpha^\dag,\hH\}$ or even $\{\hL_\alpha,\hH\}$ generate the full algebra $\A(\H)$, the system may still be Davies reducible. So, these latter criteria alone, as employed in recent works, are insufficient to establish uniqueness of steady states. However, the Davies irreducibility of maximally boundary-driven spin-1/2 XXZ chains \cite{Prosen2012-86,Prosen2015-48} follows also from the existence of a full-rank (matrix-product) steady state \cite{Prosen2011-106,Prosen2011-107b,Prosen2015-48} and Frigerio's first theorem (Theorem~\ref{theorem:Frigerio1}).

\emph{Historical note. --}
The confusion concerning Evans irreducibility and the uniqueness of steady states, as described in Sec.~\ref{sec:motivation}, may in fact originate from Evans' paper \cite{Evans1977-54}. Under \S3, it says
\emph{``We note that the argument of [3, Theorem 13], can be used to show the following. If $T_t$ is a dynamical semigroup on a $W^*$-algebra $\mc{A}$ which leaves invariant no proper hereditary [13]\, $^*$-subalgebra of $\mc{A}$, then $T_t$ has no non-scalar fixed points i.e.\ $\{x\in\mc{A}:T_t(x)=x,\, \forall t\}=\CC$''}
but associates invariance/reduction with Evans reducibility \eqref{eq:EvansReduce} as can also be seen in the comment
\emph{``We say that the hereditary $W^*$-subalgebra $\mc{B}$ reduces $T_t$
if the semigroup $T_t$ leaves $\mc{B}$ globally invariant and its restriction to $\mc{B}$ is a dynamical semigroup.
Note that the hereditary $W^*$-subalgebra $p\mc{A} p$ reduces $T_t$ if and only if $T_t p= p$ for all positive $t$.''}
However, \emph{``[3, Theorem 13]''} refers to Davies' paper \cite{Davies1970-19} and requires Davies irreducibility, i.e., the absence of nontrivial invariant subspaces (cf.\ Proposition~\ref{prop:InterpretDaviesEvans}).

In Evans' discussion, $T_t$ denotes the adjoint channel $\E^\dag=e^{\L^\dag t}$, and $T_t$ having only scalar fixed points $z\id$ implies uniqueness of the steady state (if one exists). 

\emph{Note added. --} A few days before the submission of this paper, we were made aware of the preprint~\cite{Yoshida2023_09}. It gives an alternative proof of Theorem~\ref{theorem:LKalgebra} for $\dim\H<\infty$ and discusses implications for systems with strong symmetries like spin systems with dephasing.

\vspace{1em}
\begin{acknowledgments}\vspace{-1.2em}
We gratefully acknowledge valuable feedback on the manuscript by Federico Carollo, Igor Lesanovsky, and Lorenza Viola.
\end{acknowledgments}

\newpage

\end{document}